\documentclass[a4paper,11pt,twoside,notitlepage,reqno]{amsart}

\usepackage{amsmath,amssymb,amsbsy,amsthm}
\usepackage[hmargin=1.4in,vmargin=1.4in]{geometry}
\usepackage{url}

\usepackage{tikz}
\usetikzlibrary{matrix}
\usepackage[linktocpage]{hyperref}
\hypersetup{
    colorlinks=true,        
    linkcolor=red,          
    citecolor=red,         
    filecolor=magenta,      
    urlcolor=cyan           
}

\addtocontents{toc}{\protect\setcounter{tocdepth}{1}}

\usepackage{eucal}

\usepackage{algorithm}
\usepackage{algpseudocode}

\newcommand{\N}{ \mathbb{N} }

\theoremstyle{plain}
	\newtheorem{thm}{Theorem}
	\newtheorem{quest}{Question}
	
		\numberwithin{thm}{section}
	\newtheorem{lemma}[thm]{Lemma}
	\newtheorem{prop}[thm]{Proposition}

	\newtheorem*{thm*}{Theorem}
	\newtheorem*{lemma*}{Lemma}
	\newtheorem*{prop*}{Proposition}
	\newtheorem*{cor*}{Corollary}
	\newtheorem*{conj*}{Conjecture}

\theoremstyle{definition}
	\newtheorem{example}[thm]{Example}
	\newtheorem*{example*}{Example}
	
	\newtheorem{remark}[thm]{Remark}

\begin{document}


\title[Dynamical sequences]{Dynamical sequences: closure properties and automatic identity proving}

\author{Jason P. Bell}
\address{Department of Pure Mathematics\\
University of Waterloo\\
Waterloo, ON N2L 3G1\\
Canada}
\email{jpbell@uwaterloo.ca}

\author{Yuxuan Sun}
\address{Department of Mathematics\\
University of Toronto\\
Toronto, Ontario M5S 2E4\\
Canada}
\email{austin.sun@mail.utoronto.ca}
\begin{abstract}
Given an algebraically closed field $K$, a \emph{dynamical sequence} over $K$ is a $K$-valued sequence of the form
$a(n):= f(\varphi^n(x_0))$, where $\varphi\colon X\dasharrow X$ and $f\colon X\dasharrow\mathbb{A}^1$ are rational maps
defined over $K$, and $x_0\in X$ is a point whose forward orbit avoids the indeterminacy loci
of $\varphi$ and $f$.  Many classical sequences from number theory and algebraic combinatorics fall under this dynamical
framework, and we show that the class of dynamical sequences enjoys numerous closure properties and encompasses all elliptic divisibility sequences, all Somos sequences, and all $C^n$- and $D^n$-finite sequences for all $n\ge 1$, as defined by Jiménez-Pastor, Nuspl, and Pillwein. We also give an algorithm for proving that two dynamical sequences are identical and illustrate how to use this algorithm by showing how to prove several classical combinatorial identities via this method.
\end{abstract}
\keywords{Dynamical sequences, automatic identity proving, algebraic dynamics, $D$-finite sequences}
\subjclass[2020]{14E05, 68Q40, 03B35}
\maketitle

\tableofcontents

\section{Introduction}

A \textit{rational dynamical system} is a pair $(X,\varphi)$, where $X$ is a quasiprojective variety defined over a field $K$, and $\varphi:X\dashrightarrow X$ is a rational map. The forward $\varphi$-orbit of a point $x_0\in X$ is given by
\[ O_\varphi(x_0):= \{x_0,\varphi(x_0),\varphi^2(x_0),\ldots\} \]
 as long as this orbit is defined (\textit{i.e.}, $x_0$ is outside the indeterminacy locus of $\varphi^n$ for every $n\geq 0$).  Here $\varphi^n$ denotes the $n$-fold composition 
 $\varphi\circ \varphi \circ \cdots \circ \varphi$.  A \emph{dynamical sequence} over $K$ is a sequence of the form $ f\circ \varphi^n(x_0)$, where $(X,\varphi)$ is a rational dynamical system, $f:X\dashrightarrow \mathbb{A}^1$ is a rational map, and $x_0\in X$, where we assume that the orbit of $x_0$ is outside the indeterminacy locus of $\varphi^n$ for every $n$ and $\varphi^n(x_0)$ is outside of the indeterminacy locus of $f$ for every $n$.  We then say that $(X,\varphi, x_0, f)$ is the \emph{geometric data} that generates the sequence $a(n)=f\circ \varphi^n(x_0)$.  For example, if $X=\mathbb{A}^2$, $\varphi: \mathbb{A}^2\to \mathbb{A}^2$ is the map $\varphi(x,y)=(x+1,yx)$, $x_0=(1,1)$, and $f(x,y)=y$, then it is straightforward to check that $\varphi^n(1,1)=(n+1,n!)$ for all $n\ge 0$ (where $\varphi^0$ is taken to be the identity map) and so the sequence associated to the data $(X,\varphi, x_0,f)$ is the sequence $a(n)=n!$. This example illustrates how classical sequences such as the factorials naturally arise from simple algebraic dynamical systems.

In several earlier works \cite{BGS, BHS, BCH}, the authors studied the class of sequences generated in this way and showed that many
classical sequences from number theory and algebraic combinatorics fall under this framework.  In particular, the class of dynamical sequences includes all sequences whose generating functions are $D$-finite; that is, sequences whose generating functions satisfy homogeneous linear differential
equations with rational function coefficients.  This is an important class of power series since it
appears ubiquitously in algebra, combinatorics, and number theory. In particular, this class contains:
\begin{itemize}
\item all hypergeometric series (see, for example, \cite{G09, WZ});
\item generating functions for many classes of lattice walks \cite{DHRS18};
\item diagonals of multivariate rational functions \cite{Lipshitz88};
\item power series expansions of algebraic functions \cite[Chapter 6]{Stan};
\item generating series for the cogrowth of many finitely presented groups \cite{GP};
\end{itemize}
As it turns out, the class of dynamical sequences is significantly broader than the class of $D$-finite sequences.  In this paper, we show that the class of dynamical sequences contains the following classes:
\begin{itemize}
\item all $C^n$- and $D^n$-finite sequences (as defined by  Jiménez-Pastor, Nuspl, and Pillwein \cite{C2}) for every $n\ge 1$ (see Proposition \ref{prop:Cn});
\item Somos sequences that have finitely many zero values (see Proposition \ref{prop:somos});
\item Elliptic divisibility sequences that have finitely many zero values (see Proposition \ref{prop:EDS});
\item all sequences of the form $f(d^n)$ where $d$ is a nonnegative integer and $f:\mathbb{N}\to K$ is a sequence satisfying a linear recurrence over the field $K$ (see Proposition \ref{prop:LRS}(i));
\item all sequences of the form $f(P(n))$ where $P(x)\in \mathbb{Q}[x]$ is a polynomial that maps $\mathbb{N}$ to $\mathbb{N}$ (see Proposition \ref{prop:LRS}(ii)). 
\end{itemize}
Definitions for these classes of sequences appear in Section \ref{sec:exam}.

We are also able to prove that dynamical sequences enjoy numerous closure properties.
\begin{thm}
\label{thm:main}
Let $K$ be an algebraically closed field.  Then the following hold:
\begin{enumerate}
\item (closure under sum) if $a(n), b(n)$ are dynamical sequences over $K$ then so is $a(n)+ b(n)$;
\item (closure under product) if $a(n), b(n)$ are dynamical sequences over $K$ then so is $a(n)\cdot b(n)$; 
\item (closure under taking partial sums) if $a(n)$ is a dynamical sequence over $K$ then so is the sequence $a(0)+\cdots +a(n)$;
\item (closure under taking partial products) if $a(n)$ is a dynamical sequence over $K$ then so is the sequence $a(0)\cdots a(n)$;
\item (closure under shifts) if $a(n)$ is a dynamical sequence over $K$ then so is the sequence $\{a(n+i)\}_{n\ge 0}$ for each $i\ge 0$;
\item (closure under modification of initial values) if $a(n)$ is a dynamical sequence over $K$ and if there is some $N$ and some $j\le N$ such that $b(n)=a(n-j)$ for all $n\ge N$, then $b(n)$ is also a dynamical sequence;
\item (closure under subsequences along arithmetic progressions) if $d\ge 1$ and $a(n)$ is a dynamical sequence over $K$ then so is the sequence $a(dn+i)$ for $i=0,\ldots ,d-1$;
\item (closure under flooring of indices) if $d\ge 1$ and $a(n)$ is a dynamical sequence over $K$ then so is the sequence $a(\lfloor n/d\rfloor)$;
\item (closure under interlacing) if $a_0(n),\ldots ,a_{s-1}(n)$ are dynamical sequences over $K$ then so is the sequence $c(sn+i):= a_i(n)$ for $i=0,\ldots ,s-1$ and $n\ge 0$.
\end{enumerate}
\end{thm}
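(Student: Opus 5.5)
The plan is to prove each closure property directly from the geometric data, by building a new rational dynamical system out of the old ones. Products of varieties and graph-type maps do the bookkeeping. Throughout, let $a(n)=f(\varphi^n(x_0))$ with data $(X,\varphi,x_0,f)$ and $b(n)=g(\psi^n(y_0))$ with data $(Y,\psi,y_0,g)$.

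For (1) and (2) we take $X\times Y$ with the map $\varphi\times\psi$, the point $(x_0,y_0)$, and the function $F(x,y)=f(x)+g(y)$, respectively $f(x)g(y)$. The orbit of $(x_0,y_0)$ avoids the indeterminacy loci because each coordinate does. For (3) and (4) we use $X\times\mathbb{A}^1$ with
\[
\Phi(x,t)=(\varphi(x),\,t+f(\varphi(x))),\qquad\text{respectively}\qquad \Phi(x,t)=(\varphi(x),\,t\cdot f(\varphi(x))),
\]
start at $(x_0,f(x_0))$, and read off the second coordinate. Indeterminacy is avoided because $\varphi^{n}(x_0)$ lies outside the indeterminacy locus of $f$ for every $n$. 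For (5) we simply replace $x_0$ by $\varphi^i(x_0)$, or replace $f$ by $f\circ\varphi^i$, which is still defined along the orbit.

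For (7) the natural choice is the map $\varphi^d$ with starting point $\varphi^i(x_0)$. For (8) and (9) the idea is to adjoin a finite cyclic counter. Take $Z=X\times\{0,\ldots,d-1\}$, viewed as the disjoint union of $d$ copies of $X$, which is still quasiprojective. The map sends $(x,j)\mapsto(x,j+1)$ for $j<d-1$ and $(x,d-1)\mapsto(\varphi(x),0)$. We start at $(x_0,0)$ and set $F(x,j)=f(x)$. For interlacing we use the disjoint union $\bigsqcup_{i<s} X_i$ of the varieties for $a_0,\ldots,a_{s-1}$. A cleaner device is to first form the product $X_0\times\cdots\times X_{s-1}$ with the product map, which realizes all $a_i(n)$ simultaneously. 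We then take $s$ copies of this product, indexed by $i\in\mathbb{Z}/s$, with the map $(z,i)\mapsto(z,i+1)$ for $i<s-1$ and $(z,s-1)\mapsto((\varphi_0\times\cdots\times\varphi_{s-1})(z),0)$, and the function $F(z,i)=f_i(z_i)$.

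If one insists that $X$ be irreducible, the disjoint union can be replaced by $X\times\mathbb{A}^1$ with the counter coordinate $t$ moving on $\{0,\ldots,d-1\}$. Here we use Lagrange interpolation polynomials $L_j(t)$ to write the map as
\[
\Phi(x,t)=\Bigl(L_{d-1}(t)\varphi(x)+(1-L_{d-1}(t))x,\; t+1-dL_{d-1}(t)\Bigr)
\]
in affine charts. This is where some care is needed, since a convex combination of $x$ and $\varphi(x)$ only makes sense when $X$ is affine. The fallback is therefore to first reduce to $X$ affine: we replace $X$ by $\mathbb{A}^N$ after embedding, or restrict to an affine open set containing the orbit. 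The latter is not automatic, so if needed I would just allow reducible varieties.

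For (6), the plan is to write $b$ as an interlacing or sum of pieces. $b(n)$ agrees with the shifted sequence $a(n-j)$ for $n\ge N$, and has arbitrary values $b(0),\ldots,b(N-1)$. First realize $c(n)=a(n-j)$ for $n\ge j$ with arbitrary prescribed values for $n<N$. To do this, take $X\sqcup\{p_0,\ldots,p_{N-1}\}$ (or $X\times\mathbb{A}^1$ with a counter as above), with the map $p_k\mapsto p_{k+1}$, $p_{N-1}\mapsto\varphi^{N-j}(x_0)$, and $\varphi$ on $X$. We start at $p_0$ and set $F(p_k)=b(k)$ and $F=f$ on $X$. If the ambient variety must be irreducible, we instead encode this using the counter coordinate $t\in\{0,\ldots,N\}$ frozen at $N$ after reaching it, and $F=\sum_k b(k)L_k(t)+L_N(t)f(x)$.

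I expect the only real obstacle to be this irreducibility/affineness issue in (6), (8) and (9). The substantive step is making these counter constructions into genuine rational self-maps of a single variety, and I would handle it via the affine reduction described above.
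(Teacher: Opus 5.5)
Your proof is correct, and for (1)--(7) it is essentially the paper's argument:
\begin{itemize}
\item product systems for (1)--(2);
\item an accumulator coordinate on $X\times\mathbb{A}^1$ for (3)--(4). The paper uses $(x,p)\mapsto(\varphi(x),p+f(x))$ from $(x_0,0)$ with readout $f(x)+p$, which avoids composing $f$ with $\varphi$.
\item moving the base point for (5) and (7);
\item a chain of waiting components in a disjoint union for (6). The paper uses copies of $X$ where you use points.
\end{itemize}

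The genuine difference is in (8) and (9). You use a single device: a cyclic clock of $d$ (resp.\ $s$) disjoint copies of $X$ (resp.\ of $X_0\times\cdots\times X_{s-1}$). This is direct and uniform. The paper proves (8) instead with the shift register $\mu(u_1,\ldots,u_d)=(\varphi(u_d),u_1,\ldots,u_{d-1})$ on $X^d$, started at $(x_0,\ldots,x_0)$ with readout $f(u_d)$. It then deduces (9) from $c(n)=\sum_i c_i(n)\,a_i(\lfloor n/s\rfloor)$, using (1)--(2). Here the residue-class indicators $c_i$ are linear recurrence sequences and hence dynamical.

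The paper's route buys exactly what you were worried about. It stays within products of the input varieties, so irreducibility is preserved in (8)--(9) with no affineness hypothesis. However, your worry is not an actual obstacle. The paper itself uses the disjoint union $\bigsqcup_{i=0}^N X_i$ in (6), so reducible quasiprojective varieties are admissible, and your disjoint-union constructions already form a complete proof.

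You should drop the ``affine reduction'' rather than rely on it, for three reasons:
\begin{itemize}
\item a quasiprojective $X$ need not embed in affine space (e.g.\ $\mathbb{P}^1$);
\item the point $L_{d-1}(t)\varphi(x)+(1-L_{d-1}(t))x$ need not lie in $X$ unless, say, $X=\mathbb{A}^m$;
\item in characteristic $p<d$ the nodes $0,\ldots,d-1$ are not distinct in $K$.
\end{itemize}
If irreducibility were demanded, the $X^d$ shift register is the right replacement for your counter.
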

Since constant sequences are dynamical sequences, we then see that the set of dynamical sequences over a field $K$ forms a $K$-algebra under pointwise operations. Notably, we are unable to prove that dynamical sequences are closed under convolution.  We leave this as an open question (see Question \ref{quest:1}).

Finally, we note that Wilf and Zeilberger \cite{WZ} considered the question of when two sequences $a(n)$ and $b(n)$ whose generating functions are $D$-finite have the property that $a(n)=b(n)$ for all $n\ge 0$.  As it turns out, many interesting combinatorial identities can be phrased in this framework and Wilf and Zeilberger show that in many cases these identities can be verified ``automatically'' by checking that the two sequences agree up to a certain bound.  We consider the analogous question for dynamical sequences and show that there is an algorithm for determining whether $a(n)=b(n)$ for all $n\ge 0$ for two $K$-valued dynamical sequences. Specifically, we prove the following.
\begin{thm} Given two dynamical sequences $a(n)$ and $b(n)$ over a field $K$, there is an algorithm which terminates after a finite number of steps and determines whether the identity $a(n)=b(n)$ holds for all $n \in \mathbb{N}$.
\label{thm:alg}
\end{thm}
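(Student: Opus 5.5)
By Theorem \ref{thm:main}(1), and since the negative of a dynamical sequence is dynamical (multiply by the constant sequence $-1$ using part (2)), it suffices to decide whether a single dynamical sequence $c(n)=a(n)-b(n)$ is identically zero. The construction in part (1) is explicit, so we obtain concrete geometric data $(X,\varphi,x_0,f)$ with $c(n)=f(\varphi^n(x_0))$. We may take $X$ to be an affine variety, say $X\subseteq\mathbb{A}^m$, and replace $X$ by the Zariski closure of the orbit; at worst we work in $\mathbb{A}^m$ and keep track of the relevant closed sets. Write $Z=\{x: f(x)=0\}$, more precisely the closure of the locus where $f$ is defined and vanishes. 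The identity $c(n)=0$ for all $n$ says exactly that the orbit $O_\varphi(x_0)$ lies in $Z$.

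The algorithm will build an ascending chain of ideals. Set $I_k\subseteq K[x_1,\dots,x_m]$ to be the ideal generated by the numerators of $f\circ\varphi^j$ for $0\le j\le k$, after clearing denominators that are nonvanishing along the orbit, intersected with the coordinate ring of $X$. For each $k$ we first check numerically that $c(k)=0$; if some $c(k)\neq 0$ we halt and answer ``no''. By Noetherianity the chain $I_0\subseteq I_1\subseteq\cdots$ stabilises, and stabilisation can be detected by Gröbner basis ideal membership. Concretely, we look for the first $N$ such that $f\circ\varphi^{N+1}$ lies in the ideal generated by $f,\, f\circ\varphi,\dots, f\circ\varphi^N$, localised at the denominators. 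Once this holds, the condition is self-propagating. If $V_N=V(I_N)$ satisfies $\varphi(V_N\cap U)\subseteq V_N$ on the domain $U$ where things are defined, and $x_0\in V_N$, then the whole orbit stays in $V_N\subseteq Z$. To see this, compose the membership relation $f\circ\varphi^{N+1}=\sum_j g_j\,(f\circ\varphi^j)$ with $\varphi^{n}$, which gives $c(n+N+1)=\sum_j g_j(\varphi^n x_0)\,c(n+j)$. Induction then shows that $c(n)=0$ for all $n$, provided the $g_j$ are regular along the orbit. The algorithm therefore checks $c(0),\dots,c(N)$ and answers ``yes'' if all of them vanish.

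The main obstacle is termination together with the regularity of the coefficients $g_j$ along the orbit, since the orbit avoids the indeterminacy loci but might hit poles of the $g_j$. My first attempt will be to work in the coordinate ring of the open set $U=X\setminus(\text{indeterminacy loci})$. This gives a finitely generated algebra $R=K[X]_h$, and in $R$ the ring $\mathcal{O}(U)$ stays Noetherian only if $U$ is affine. To arrange this, I will pass to the graph and enlarge $X$ by adjoining inverses of the denominators of $\varphi$ and $f$ as new coordinates. This is the kind of reduction to regular self-maps of affine varieties that I expect the paper uses for the closure properties. After the reduction $\varphi$ is a morphism of an affine variety $X$ with $x_0\in X$, and $f\in K[X]$. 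The ideals $I_k=(f,\varphi^*f,\dots,(\varphi^*)^kf)\subseteq K[X]$ then form an ascending chain that stabilises by Hilbert's basis theorem, giving termination. Every $g_j\in K[X]$ is regular everywhere, so the recurrence argument of the previous paragraph is valid.
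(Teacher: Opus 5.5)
\textbf{There is a genuine gap.} The reduction in your last paragraph fails, and you rely on it for both termination and the regularity of the $g_j$.

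\textbf{Why the reduction fails.} Adjoining $t=1/q$ for a denominator $q$ of $\varphi$ or $f$ does not give a self-morphism. The lifted map must send $t$ to $1/(q\circ\varphi)$, which is not regular on $\{tq=1\}$. You would then have to adjoin $1/(q\circ\varphi^2)$, $1/(q\circ\varphi^3)$, and so on forever. The real obstruction is not whether $\mathcal{O}(U)$ is Noetherian; it is that $\varphi$ does not map $U$ into $U$, so $\varphi^*$ does not act on $\mathcal{O}(U)$.

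No other construction can rescue the claim. Suppose $\varphi'$ is a morphism of a closed $X'\subseteq\mathbb{A}^M$ and $f'\in K[X']$. Then every value $f'((\varphi')^n(x_0'))$ lies in the finitely generated $\mathbb{Z}$-algebra $A$ generated by the coefficients of the polynomials defining $\varphi'$ and $f'$ and the coordinates of $x_0'$. The sequence $1/n!$ over $\mathbb{C}$ is dynamical: take $\varphi(x,y)=(x+1,y/x)$, $x_0=(1,1)$, $f=y$. An affine regular model for it would force $\mathbb{Q}\subseteq A$. That is impossible, because a field that is a finitely generated ring is finite. So $A$ has a maximal ideal with finite residue field, and the characteristic of that field is a non-unit in $A$.

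\textbf{How to repair it.} Your ideal-chain argument works once you drop the reduction and work with cleared numerators. Let $N_j/D_j$ be the expression for $f\circ\varphi^j$ obtained by substituting fixed representations of $\varphi$ and $f$ whose denominators do not vanish on the orbit (as your second paragraph already assumes), and clearing denominators. Then
\begin{itemize}
\item $D_j(\varphi^n(x_0))\neq 0$ for all $n$, and $N_j(\varphi^n(x_0))=D_j(\varphi^n(x_0))\,c(n+j)$;
\item the chain $(N_0,\dots,N_k)$ in the ambient polynomial ring stabilizes by Hilbert's basis theorem;
\item a relation $N_{N+1}=\sum_{j\le N}g_jN_j$ has polynomial coefficients $g_j$, so evaluating at $\varphi^n(x_0)$ gives your recurrence with everything regular.
\end{itemize}
This is in effect what the paper's SageMath computations check. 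Alternatively, you can localize at the polynomials nonvanishing on the orbit; that ring is Noetherian but not finitely generated.

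\textbf{Comparison with the paper.} The paper's written proof takes the dual, set-theoretic route. It uses the descending chain of Zariski closed sets $Z_i$, the closures of the common zero loci of $h,\dots,h\circ\chi^i$, which stabilizes by noetherianity of the Zariski topology. A minimal-counterexample argument at $u=\chi^{m-n_0-1}(z_0)$ then shows that $Z_{n_0+1}=Z_{n_0}$ propagates vanishing along the orbit. Because that argument is pointwise, no coefficient functions appear and your regularity problem never arises. What your ideal-theoretic version buys instead is an explicit recurrence certificate.
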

In fact, one can give explicit bounds on the number of steps required for the algorithm to terminate in terms of the geometric data associated to the sequences $a(n)$ and $b(n)$ (see Remark \ref{rem:bounds}), but in practice the algorithm terminates much more quickly than these worst-case bounds and we give several examples illustrating how to employ this algorithm to prove equality of two dynamical sequences.

\subsection{Conventions} Throughout, $\N:= \{0,1,2,3,\ldots\}$ and $K$ will be an algebraically closed field.

\subsection{Organization}
In Section~\ref{sec:closure}, we prove the closure properties described in Theorem~\ref{thm:main}. Section~\ref{sec:exam} explores examples of dynamical sequences, showing that the class includes several classical sequence families. In Section~\ref{sec:alg}, we establish Theorem~\ref{thm:alg} and demonstrate the identity-proving algorithm in several concrete cases. We conclude by posing some open questions about dynamical sequences in Section~\ref{sec:questions}.

\subsection*{Acknowledgments} J. P. Bell was supported by NSERC grant RGPIN-2022-02951. Y. Sun was partially supported by a NSERC CGS M Award for his
Master of Mathematics thesis, which led to the completion of parts of this project.
\smallskip

\section{Closure properties} \label{sec:closure}
In this section, we prove Theorem~\ref{thm:main}.

\begin{proof}[Proof of Theorem~\ref{thm:main}]
We begin by proving closure under sum and product. Let $a(n)$ and $b(n)$ be dynamical sequences over $K$ generated by geometric data $(X,\phi,x_0,f)$ and $(Y,\psi,y_0,g)$, respectively; that is, $a(n) = f(\phi^n(x_0))$ and $b(n) = g(\psi^n(y_0))$ for all $n \geq 0$. Consider the rational map $\phi \times \psi \colon X \times Y \dashrightarrow X \times Y$, and define rational maps $h_1, h_2 \colon X \times Y \dashrightarrow \mathbb{A}^1$ by
\[
h_1(x,y) = f(x)g(y) \quad \text{and} \quad h_2(x,y) = f(x) + g(y).
\]
Then $h_1((\phi \times \psi)^n(x_0, y_0)) = a(n)b(n)$ and $h_2((\phi \times \psi)^n(x_0, y_0)) = a(n) + b(n)$, establishing items (1) and (2).

To prove closure under partial sums and products, suppose $(X,\phi,x_0,f)$ generates $a(n)$. Let $Z = X \times \mathbb{A}^1$, and define a rational map $\chi \colon Z \to Z$ by $\chi(x,p) = (\phi(x), p + f(x))$. Let $z_0 = (x_0, 0)$. Then by induction, 
\[
\chi^n(z_0) = (\phi^n(x_0), a(0) + \cdots + a(n-1)).
\]
Define $h \colon Z \to \mathbb{A}^1$ by $h(x,p) = f(x) + p$. Then
\[
h(\chi^n(z_0)) = a(0) + \cdots + a(n),
\]
proving (3). Closure under partial products is proved the same way, but where we take $\chi(x,p) = (\phi(x), p\cdot f(x))$, $z_0=(x_0,1)$, and $h(x,p)=f(x)\cdot p$. 

Closure under forward shifts is immediate: if $(X,\phi,x_0,f)$ generates $a(n)$, then $(X,\phi,\phi^i(x_0),f)$ generates $a(n+i)$, verifying (5).

To prove closure under modification of initial values, suppose that $j\le N$ and $b(n)=a(n-j)$ for all $n \geq N$, and that $(X,\phi,x_0,f)$ generates $a(n)$. Consider the disjoint union $W := \bigsqcup_{i=0}^N X_i$, where each $X_i$ is a copy of $X$. Define a rational map $\Psi \colon W \to W$ by
\[
\Psi|_{X_i}(x) = 
\begin{cases}
x \in X_{i+1} & \text{if } 0 \leq i < N,\\
\phi(x) \in X_N & \text{if } i = N.
\end{cases}
\]
Define a rational function $g \colon W \dashrightarrow \mathbb{A}^1$ by setting $g|_{X_i}$ to be the constant function $b(i)$ for $i < N$ and $g|_{X_N} = f$. Then the geometric data $(W, \Psi, \phi^{N-j}(x_0) , g)$ generates $b(n)$, proving (6).

We can prove (7) quickly: if $a(n)$ is generated by the data $(X,\phi, x_0, f)$ then $(X,\phi^d, \phi^i(x_0), f)$ generates $a(dn+i)$.

To prove closure under flooring of indices, let $d \geq 1$ and suppose $(X,\phi,x_0,f)$ generates $a(n)$. Let $U = X^d$ and define $u_0 = (x_0, \ldots, x_0) \in U$. Define a rational map $\mu \colon U \to U$ by
\[
\mu(u_1, \ldots, u_d) = (\phi(u_d), u_1, \ldots, u_{d-1}).
\]
By induction, for all $n \geq 0$ and $0 \leq i < d$, we have
\[
\mu^{dn+i}(u_0) = (\phi^{n+1}(x_0), \ldots, \phi^{n+1}(x_0), \phi^n(x_0), \ldots, \phi^n(x_0)),
\]
with first $i$ coordinates equal to $\phi^{n+1}(x_0)$. Define $e \colon U \to \mathbb{A}^1$ by $e(u_1,\ldots,u_d) = f(u_d)$. Then $e(\mu^n(u_0)) = a(\lfloor n/d \rfloor)$, proving (8).

To prove closure under interlacings, we follow the method of \cite[Theorem 5.1(6)]{C2}. Let $a_0(n),\ldots,a_{s-1}(n)$ be dynamical sequences. For each $i$, the sequence $a_i(\lfloor n/s \rfloor)$ is dynamical by (8). Define
\[
c_i(n) = 
\begin{cases}
1 & \text{if } n \equiv i \pmod{s},\\
0 & \text{otherwise}.
\end{cases}
\]
Each $c_i(n)$ satisfies a linear recurrence and hence is a dynamical sequence. Then
\[
b(n) = \sum_{i=0}^{s-1} c_i(n) \cdot a_i\left( \left\lfloor \frac{n}{s} \right\rfloor \right)
\]
is a dynamical sequence by closure under sums and products, establishing (9).
\end{proof}

\section{Some notable classes of dynamical sequences}\label{sec:exam}

In this section, we give a brief overview of several notable families of sequences from combinatorics and number theory that lie within the class of dynamical sequences. Our main result is the following theorem.

\begin{thm}\label{thm:main2}
The following sequences are dynamical sequences:
\begin{enumerate}
    \item $D$-finite sequences and more generally $D^n$-finite sequences for every $n$;
    \item Somos sequences;
    \item Elliptic divisibility sequences;
    \item Sequences of the form $f(P(n))$, where $f:\mathbb{N}\to K$ is a linear recurrence sequence and $P(x)\in \mathbb{Z}[x]$ takes nonnegative values on $\mathbb{N}$;
    \item Sequences of the form $f(a^n)$, where $f:\mathbb{N}\to K$ is a linear recurrence sequence and $a\ge 1$ is a positive integer.
\end{enumerate}
\end{thm}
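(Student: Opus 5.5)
I will treat the five items separately, each time exhibiting explicit geometric data $(X,\varphi,x_0,f)$ and using the closure properties of Theorem~\ref{thm:main} to reduce to simple cases.

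\textbf{Items (4) and (5).} These are the most direct. A linear recurrence sequence $f$ of order $k$ can be written as $f(m)=e^{T}A^m v$ with $A\in M_k(K)$. For (5), I take $X=\mathbb{A}^{k^2}\times\mathbb{A}^k$ as the space of pairs $(B,w)$ with $\varphi(B,w)=(B^a,w)$ and $x_0=(A,v)$. Then $\varphi^n(x_0)=(A^{a^n},v)$, and the function $(B,w)\mapsto e^TBw$ recovers $f(a^n)$. For (4), write $P(n+1)-P(n)=Q(n)$, which is integer valued. A polynomial $P$ of degree $d$ has iterated finite differences $\Delta^jP(n)$ that are integer valued, with $\Delta^dP$ constant. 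I would carry the matrices $(A^{\Delta^0P(n)},\dots,A^{\Delta^{d}P(n)})$ and use the update $A^{\Delta^jP(n+1)}=A^{\Delta^jP(n)}A^{\Delta^{j+1}P(n)}$. The difficulty is that intermediate differences may be negative, so $A$ must be invertible. To arrange this, first reduce $f$ to its invertible part: the nilpotent part contributes only finitely many values, and $P(n)\to\infty$ unless $P$ is constant. Those finitely many values are handled by Theorem~\ref{thm:main}(6), or more carefully, the finitely many $n$ with $P(n)<k$ are handled by adding indicator-style corrections. The remaining case, $A$ invertible, works over $GL_k$, where $\varphi$ is a morphism.

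\textbf{Item (2), Somos sequences.} A Somos-$k$ sequence is given by $a(n+k)a(n)=\sum_i \alpha_i a(n+k-i)a(n+i)$. I take $X=\mathbb{A}^k$ with the rational map
\[(y_1,\dots,y_k)\mapsto \Bigl(y_2,\dots,y_k,\tfrac{1}{y_1}\textstyle\sum_i\alpha_i y_{k+1-i}y_{1+i}\Bigr)\]
and $f=y_1$. The orbit stays out of the indeterminacy locus provided no $a(n)$ vanishes, which is the hypothesis under which the sequence is defined by division. If finitely many zeros are allowed, as in the introduction, I would start the orbit after the last zero using (5) and then patch the initial terms with (6).

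\textbf{Item (3), elliptic divisibility sequences.} These satisfy a Somos-4 type recurrence $W_{n+2}W_{n-2}=W_{n+1}W_{n-1}W_n^2-W_{n+2}\ldots$. More precisely, they satisfy $W_{m+n}W_{m-n}W_1^2=W_{m+1}W_{m-1}W_n^2-W_{n+1}W_{n-1}W_m^2$, and specializing $n=2$ gives a Somos-4 relation. Hence (3) follows from (2) once finitely many zeros are handled as above. If $W_1W_2=0$ the degenerate cases are periodic or trivial and are checked directly.

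\textbf{Item (1), $D^n$-finite sequences.} This is the main obstacle, since I must recall the definition of \cite{C2}. A sequence is $C^n$- or $D^n$-finite when it satisfies a linear recurrence whose coefficients are $D^{n-1}$-finite sequences. The leading coefficient must be nonvanishing, or vanishing only finitely often in their convention. I would argue by induction on $n$. Suppose the coefficient sequences $c_0(m),\dots,c_r(m)$ are dynamical, generated by data $(X_j,\varphi_j,x_j,f_j)$. Form the product $Y=\prod X_j\times\mathbb{A}^r$ with the map
\[(\xi,y_1,\dots,y_r)\mapsto\Bigl(\prod\varphi_j(\xi),\,y_2,\dots,y_r,\,-\tfrac{1}{f_r}\textstyle\sum_{i<r}f_i\,y_{i+1}\Bigr).\]
This orbit computes the sequence, provided the leading coefficient is nonzero along the orbit. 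Finitely many zeros are again handled by shifting the orbit with (5) and repairing the initial terms with (6). The base case $n=1$ consists of $D$-finite or $C$-finite sequences, whose coefficients are polynomials in $m$ or constants; these are dynamical via $m\mapsto m+1$. The care needed is mostly in matching the precise definition in \cite{C2}, which may have recurrences in a shifted or difference form; any such form can be rewritten as above using (5).
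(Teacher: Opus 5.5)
Your proposal is correct.

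\textbf{Items (1)--(3).} These follow essentially the paper's argument:
\begin{itemize}
\item Your product-plus-shift-register map is the construction of Proposition~\ref{prop:R}, applied as in Proposition~\ref{prop:Cn} with $R=(x_1y_1+\cdots+x_sy_s)/x_0$.
\item Your Somos map on $\mathbb{A}^k$, started past the last zero, is Proposition~\ref{prop:somos}.
\item Elliptic divisibility sequences become the $n=2$ Somos-$4$-type specialization. Because you allow coefficients $\alpha_i$, this reduction is literal rather than ``the same strategy.''
\end{itemize}
The one difference is the holonomic base case. The paper cites \cite{BCH} for it. You derive it from the same construction, with polynomial coefficients generated by $m\mapsto m+1$, shifting past the finitely many integer zeros of the leading coefficient and patching with Theorem~\ref{thm:main}(6).

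\textbf{Items (4) and (5).} Here you genuinely diverge. The paper expands $f(n)=\sum_{i,j}c_{i,j}\binom{n}{j}\lambda_i^n$. It then shows that $\lambda^{d^n}$ and $\lambda^{P(n)}$ are dynamical (Lemma~\ref{lem:lambda}) and that the factors $\binom{d^n}{j}$ and $\binom{P(n)}{j}$ are dynamical, and concludes by closure under sums and products. This is modular, but it needs three extra things:
\begin{itemize}
\item the eigenvalues $\lambda_i$;
\item a separate characteristic-$p$ argument for the binomial factors;
\item for $\lambda^{P(n)}$, an expansion $P=\sum_i c_i\binom{x}{i}$ that the paper asserts has nonnegative $c_i$.
\end{itemize}
That last claim fails in general, e.g.\ $(x-1)^2=2\binom{x}{2}-\binom{x}{1}+1$. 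So the paper's monomial $u_1^{c_1}\cdots u_d^{c_d}$ must be read as a Laurent monomial. This is harmless for $\lambda\neq 0$, and $\lambda=0$ gives a finitely supported sequence.

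Your companion-matrix route is characteristic-free and needs no eigenvalues. For (5), the single morphism $B\mapsto B^a$ of $M_k\cong\mathbb{A}^{k^2}$ suffices. For (4), the finite-difference register $B_j\mapsto B_jB_{j+1}$ confronts the negative-difference issue directly, which is exactly what your Fitting-decomposition reduction to invertible $A$ handles.

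You could even skip that reduction. A nonconstant $P$ with $P(\N)\subseteq\N$ has positive leading coefficient, so every $\Delta^jP$ does too, and $\Delta^jP(n)\ge 0$ for all $j$ once $n\ge N$. Starting the register at $n=N$ then uses only nonnegative powers of $A$, and Theorem~\ref{thm:main}(6) restores the first $N$ terms.
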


Definitions and constructions for the first three classes appear in the subsections below. The proof of Theorem~\ref{thm:main2} relies on the following proposition, which allows one to construct new dynamical sequences from previously known ones.

\begin{prop}\label{prop:R}
Let $K$ be a field and let $c_1(n),\ldots,c_s(n)$ be $K$-valued dynamical sequences. Suppose that $f:\mathbb{N}\to K$ satisfies a recurrence of the form
\[
f(n) = R(c_1(n), \ldots, c_s(n), f(n-1), \ldots, f(n-d))
\]
for all $n \ge d$, where $R(y_1,\ldots,y_s,x_1,\ldots,x_d)\in K(y_1,\ldots,y_s,x_1,\ldots,x_d)$ is a rational function regular at the evaluation point for each such $n$. Then $f(n)$ is a dynamical sequence.
\end{prop}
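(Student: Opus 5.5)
The plan is to build a single rational dynamical system that carries the data of all the $c_i$ along with a sliding window of the last $d$ values of $f$. Suppose $c_i(n)$ comes from geometric data $(X_i,\phi_i,x_{i,0},g_i)$ for $i=1,\ldots,s$. Set $X=X_1\times\cdots\times X_s$ and $\phi=\phi_1\times\cdots\times\phi_s$. As in the proof of closure under sums and products in Theorem~\ref{thm:main}, the orbit of $x_0=(x_{1,0},\ldots,x_{s,0})$ under $\phi$ is defined, and $c_i(n)=g_i(\pi_i(\phi^n(x_0)))$, where $\pi_i$ is projection onto $X_i$.

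Next take $Z=X\times\mathbb{A}^d$ and define the rational map
\[
\chi(x,z_1,\ldots,z_d)=\bigl(\phi(x),\,R(g_1(\pi_1\phi(x)),\ldots,g_s(\pi_s\phi(x)),z_1,\ldots,z_d),\,z_1,\ldots,z_{d-1}\bigr).
\]
The starting point is $w_0=(\phi^{d}(x_0),f(d-1),f(d-2),\ldots,f(0))$. Note that $X$ is shifted by $d$ so that the $c$-index matches. Alternatively, one can start at $(x_0,f(d-1),\ldots,f(0))$ and evaluate the $c$'s at $\phi^{d}$ of the current point via a shift, using item (5) of Theorem~\ref{thm:main} to replace each $c_i$ by $c_i(n+d)$. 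The cleanest choice is to replace the data of $c_i$ by that of $c_i(n+d)$ from the start, so that the $X$-component at step $m$ is $\phi^m(x_0')$ with $g_i$ giving $c_i(m+d)$. Then
\[
\chi(x,z)=\bigl(\phi(x),\,R(g_1(\pi_1 x),\ldots,g_s(\pi_s x),z_1,\ldots,z_d),\,z_1,\ldots,z_{d-1}\bigr),
\]
and we start at $w_0=(x_0',f(d-1),\ldots,f(0))$.

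By induction on $m$ we get $\chi^m(w_0)=(\phi^m(x_0'),f(m+d-1),\ldots,f(m))$. The inductive step is exactly the recurrence at $n=m+d$. The output function is the projection $h(x,z)=z_d$, giving $h(\chi^m(w_0))=f(m)$.

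The main obstacle is checking that the orbit of $w_0$ avoids the indeterminacy locus of every iterate $\chi^m$, not just of $\chi$. Regularity of $\chi$ at each orbit point follows from two facts:
\begin{itemize}
\item $\phi$ and the $g_i$ are regular along the orbit, by the hypotheses on the geometric data;
\item $R$ is regular at the evaluation point $(c_1(n),\ldots,c_s(n),f(n-1),\ldots,f(n-d))$, by hypothesis.
\end{itemize}
The composite of rational maps is regular at points where each successive map is regular at the image point. So $\chi^m$ is defined at $w_0$, and the orbit is well defined, taking the paper's convention that the orbit is defined when the iterates are successively regular. The output $h$ is a coordinate projection, hence regular everywhere.

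If needed for irreducibility conventions, one could take $X$ to be a product of possibly reducible varieties, as the paper already allows disjoint unions in Theorem~\ref{thm:main}(6). This completes the plan.
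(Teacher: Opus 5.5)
Your proposal is correct and is essentially the paper's construction: take the product $X_1\times\cdots\times X_s\times\mathbb{A}^d$, which carries the orbits of the $c_i$ together with a sliding window of the last $d$ values of $f$, and read off $f$ by a coordinate projection. In fact your final version gets the indexing right where the paper's $\Psi$ is slightly off: you replace each $c_i$ by $c_i(n+d)$ via Theorem~\ref{thm:main}(5) and store the window newest-first, whereas the paper's $\Psi$ feeds $c_i(n)$ and $f(n),\ldots,f(n+d-1)$ into $R$; your regularity check along the orbit also fills in a point the paper leaves implicit, but discard your first variant, which evaluates $g_i$ at $\pi_i\phi(x)$ starting from $\phi^d(x_0)$ and is therefore off by one.
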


\begin{proof}
Let $(X_i, \phi_i, x_0^{(i)},f_i)$ for $i=1,\ldots,s$ be geometric data for the sequences $c_1(n),\ldots,c_s(n)$. Define the variety
\[
Y := X_1 \times \cdots \times X_s \times \mathbb{A}^d,
\]
and define a rational map $\Psi : Y \to Y$ by
\begin{align*}
&~ \Psi(p_1,\ldots,p_s, a_1,\ldots,a_d) \\
&= \left( \phi_1(p_1), \ldots, \phi_s(p_s), a_2,\ldots,a_d, R(f_1(p_1),\ldots,f_s(p_s), a_1,\ldots,a_d) \right).
\end{align*}
Let $y_0 = (x_0^{(1)},\ldots,x_0^{(s)}, f(0),\ldots,f(d-1))$. By induction, we find
\[
\Psi^n(y_0) = (\phi_1^n(x_0^{(1)}), \ldots, \phi_s^n(x_0^{(s)}), f(n), \ldots, f(n+d-1)).
\]
Defining $g: Y \to \mathbb{A}^1$ as the projection onto the first coordinate of the $\mathbb{A}^d$ factor, we obtain $f(n) = g(\Psi^n(y_0))$, proving that $f(n)$ is a dynamical sequence.
\end{proof}

We now apply this result to several concrete examples of sequences of interest.

\subsection{$D^n$-finite sequences}

The concept of $D$-finite sequences (i.e., sequences satisfying linear recurrences with polynomial coefficients) is central within enumerative combinatorics and symbolic computation. The notion has been extended to $D^n$-finite sequences by Jim\'enez-Pastor, Nuspl, and Pillwein \cite{C2}, where for $n\ge 2$ a $D^n$-finite sequence satisfies a recurrence with $D^{n-1}$-finite coefficients, and $D^1$-finite sequences are those that are holonomic (or, equivalently, whose generating series are $D$-finite). Analogously, the authors also define $C^n$-finite sequences, where the recurrence coefficients are $C^{n-1}$-finite, with $C^1$ denoting linear recurrence sequences. 

Thus, $C^n$-finite sequences are contained in the class of $D^n$-finite sequences, and $D^n$-finite sequences are contained in the $C^{n+1}$-finite ones. These hierarchies capture sequences beyond the holonomic class. For instance, $a(n) = F_{n^2}$, where $F_n$ is the $n$-th Fibonacci number, is $D^2$-finite but not $D$-finite \cite[Example 2.2]{C2}.

\begin{prop}\label{prop:Cn}
The class of dynamical sequences contains all $C^n$- and $D^n$-finite sequences for every $n \ge 1$.
\end{prop}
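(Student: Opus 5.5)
The plan is to argue by induction on $n$, feeding the defining recurrences directly into Proposition~\ref{prop:R}. Recall that a $C^n$-finite (resp.\ $D^n$-finite) sequence $f$ satisfies a recurrence
\[
c_d(k) f(k+d) + c_{d-1}(k) f(k+d-1) + \cdots + c_0(k) f(k) = 0 \qquad (k\ge 0),
\]
where the coefficient sequences $c_i$ are $C^{n-1}$-finite (resp.\ $D^{n-1}$-finite). For $n=1$ they are constants (resp.\ polynomials in $k$). In the definition of \cite{C2}, the leading coefficient $c_d(k)$ is nonzero for all sufficiently large $k$. Since the paper notes that $C^n \subseteq D^n \subseteq C^{n+1}$, it would suffice to treat the $C^n$ hierarchy. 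The argument is uniform, though, so I will prove the following directly: if all coefficient sequences $c_i$ are dynamical, then so is $f$.

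\textbf{Base case.} Constant sequences are dynamical: take $X$ a point, or $f$ a constant function. The sequence $k \mapsto k$ is dynamical via $\varphi(x)=x+1$ on $\mathbb{A}^1$ with $x_0=0$ and $f$ the identity. Hence, by closure under sums and products (Theorem~\ref{thm:main}(1),(2)), every polynomial sequence $p(k)$ is dynamical. So the coefficients of $C^1$- and $D^1$-finite recurrences are dynamical, and the inductive step below gives the base case. The induction hypothesis likewise supplies dynamical coefficients at level $n$ from level $n-1$.

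\textbf{Inductive step.} Choose $N$ with $c_d(k)\neq 0$ for all $k\ge N$. By closure under shifts (Theorem~\ref{thm:main}(5)), the sequences $\tilde c_i(k):=c_{i}(k+N-d)$ are dynamical; here I take $N\ge d$. The shifted sequence $g(k):=f(k+N-d)$ then satisfies, for $k\ge d$,
\[
g(k) = R\bigl(\tilde c_0(k-d),\ldots,\tilde c_d(k-d), g(k-1),\ldots,g(k-d)\bigr),
\qquad R = -\frac{\sum_{i=0}^{d-1} y_i\, x_{d-i}}{y_d}.
\]
To put this in exactly the form of Proposition~\ref{prop:R}, the coefficient sequences should be indexed by $k$ rather than $k-d$. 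I would therefore shift once more, or simply choose the shift so the indices align; the index bookkeeping is routine. The rational function $R$ is regular at every evaluation point, because its only denominator is $y_d$, evaluated at $c_d(\cdot)$ with argument $\ge N$, which is nonzero. Proposition~\ref{prop:R} then shows that $g$ is dynamical. Finally, $f$ agrees with a shift of $g$ from index $N-d$ on, so $f$ is dynamical by closure under modification of initial values (Theorem~\ref{thm:main}(6)).

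\textbf{Main obstacle.} The only delicate point is regularity of the rational recurrence: the leading coefficient may vanish at finitely many indices. Solving for $f(k+d)$ there would divide by zero, and the orbit would hit the indeterminacy locus. The shift-then-restore argument above, using items (5) and (6) of Theorem~\ref{thm:main}, is designed precisely to avoid those finitely many bad indices. If the definition in \cite{C2} instead requires $c_d(k)\neq 0$ for all $k$, this step is vacuous.
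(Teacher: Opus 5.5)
Your argument is correct and uses the same engine as the paper: induction on $n$, solving the recurrence for its top term, and feeding the resulting rational function $R$ into Proposition~\ref{prop:R}. It differs from the paper in two ways.

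First, the paper's base case cites \cite[\S 6]{BCH} for the fact that holonomic sequences are dynamical. You instead get the base case from the inductive step itself, after observing that polynomial sequences are dynamical (via $x\mapsto x+1$ and closure under sums and products). This makes the proof self-contained and treats $C^n$ and $D^n$ uniformly as ``recurrences with dynamical coefficients.''

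Second, the paper assumes the leading coefficient is nonzero on the whole range $n\ge s$. You allow finitely many zeros and remove them using Theorem~\ref{thm:main}(5) and (6). This extra robustness is actually needed for your base case, since a polynomial leading coefficient can vanish at finitely many natural numbers.

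There is one index slip, at exactly the step you call delicate. With $g(k)=f(k+N-d)$, computing $g(d)=f(N)$ uses the recurrence at index $N-d$. So the denominator is $c_d(k+N-2d)$, whose argument is only $\ge N-d$, not $\ge N$ as you claim, and regularity can fail at the first few steps.

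To fix it, take $g(k):=f(k+N)$ and $\hat c_i(k):=c_i(k+N-d)$. Then for $k\ge d$ you have $g(k)=R(\hat c_0(k),\ldots,\hat c_d(k),g(k-1),\ldots,g(k-d))$ with denominator $c_d(k+N-d)\neq 0$. Since $f(n)=g(n-N)$ for $n\ge N$, Theorem~\ref{thm:main}(6) applies with $j=N$. Equivalently, you can simply enlarge $N$ by $d$ at the start.
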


\begin{proof}
It suffices to prove this for $D^n$-finite sequences. We proceed by induction on $n$. For $n = 1$, $D^1$-finite sequences are holonomic, and these are dynamical (see \cite[\S 6]{BCH}). Assume the result for $n = m$ and let $f(n)$ be $D^{m+1}$-finite. Then there exists $s \ge 1$ and $D^m$-finite sequences $a_0(n), \ldots, a_s(n)$ such that
\[ a_0(n) f(n) = a_1(n) f(n-1) + \cdots + a_s(n) f(n-s), \quad \text{for all } n \ge s, \]
with $a_0(n) \ne 0$ on this domain. Define
\[ R(x_0, x_1, \ldots, x_s, y_1, \ldots, y_s) = \frac{x_1 y_1 + \cdots + x_s y_s}{x_0}, \]
and apply Proposition~\ref{prop:R} to conclude that $f(n)$ is a dynamical sequence.
\end{proof}

\subsection{Somos sequences}

Somos sequences were introduced by Somos (see Gale \cite{Gale}). These are sequences $a(n)$ satisfying a bilinear recurrence of the form
\begin{equation}\label{eq:Somos}
a(n) a(n-k) = \sum_{i=1}^{\lfloor k/2 \rfloor} a(n-i) a(n-k+i),
\end{equation}
for some fixed $k$ and all $n \ge k$. To compute $a(n)$ from this recurrence, one needs $a(i) \ne 0$ for all relevant indices.

\begin{prop}\label{prop:somos}
Let $a(n)$ be a Somos sequence taking the value zero only finitely many times. Then $a(n)$ is a dynamical sequence.
\end{prop}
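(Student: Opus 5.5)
The plan is to reduce to Proposition~\ref{prop:R} after discarding the finitely many zero terms, using closure under modification of initial values (Theorem~\ref{thm:main}(6)).

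Since $a(n)=0$ for only finitely many $n$, choose $M$ with $a(n)\neq 0$ for all $n\ge M$. Consider the shifted sequence $b(n):=a(n+M)$. For every $n\ge k$, rewriting \eqref{eq:Somos} at index $n+M$ gives
\[
b(n)=\frac{\sum_{i=1}^{\lfloor k/2\rfloor} b(n-i)\,b(n-k+i)}{b(n-k)} ,
\]
and here $b(n-k)=a(n-k+M)\neq 0$ because $n-k+M\ge M$. So $b$ satisfies a recurrence of the form in Proposition~\ref{prop:R} with $s=0$ (no auxiliary sequences; if one insists on $s\ge 1$, take $c_1$ to be a constant sequence, which is dynamical), $d=k$, and
\[
R(x_1,\ldots,x_k)=\frac{\sum_{i=1}^{\lfloor k/2\rfloor} x_i x_{k-i}}{x_k}.
\]
The only hypothesis to check is that $R$ is regular at each evaluation point $(b(n-1),\ldots,b(n-k))$. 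This holds because the denominator $x_k$ evaluates to $b(n-k)\ne0$. Proposition~\ref{prop:R} then shows that $b(n)$ is dynamical.

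Finally, $a(n)=b(n-M)$ for all $n\ge M$, so Theorem~\ref{thm:main}(6), applied with $N=j=M$, shows that $a(n)$ is itself a dynamical sequence.

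There is no real obstacle. The only subtle point is regularity of $R$ along the orbit, which is exactly why the zeros must be avoided and why the shift by $M$ is needed before applying Proposition~\ref{prop:R}. We also need the initial values $b(0),\ldots,b(k-1)$ to be finite elements of $K$. They are, since $a$ is a $K$-valued sequence by hypothesis; this holds even when the recurrence cannot produce some early terms of $a$ because of zeros, as the statement implicitly assumes $a$ is a well-defined sequence.
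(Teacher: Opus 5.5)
Your proof is correct and is essentially the paper's argument: pass to $a(n+M)$ beyond the last zero, and realize the Somos recurrence as a rational self-map of $\mathbb{A}^k$ (exactly the map Proposition~\ref{prop:R} builds when $s=0$), whose orbit stays in the domain of definition because the divisor $a(n-k+M)$ is nonzero. Your last step, Theorem~\ref{thm:main}(6) with $N=j=M$, is the right way to return from $a(n+M)$ to $a(n)$; the paper instead cites closure under shifts, Theorem~\ref{thm:main}(5), which only goes in the forward direction, so your version is the more accurate one.
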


\begin{proof}
Let $N$ be large enough that $a(n) \ne 0$ for all $n \ge N$. Let $k$ be the recurrence order. Define $X = \mathbb{A}^k$ and set
\[ \phi(x_1, \ldots, x_k) = \left( x_2, \ldots, x_k, \left( \sum_{i=1}^{\lfloor k/2 \rfloor} x_{k+1-i} x_{1+i} \right) x_1^{-1} \right). \]
Then the recurrence gives
\[ \phi(a(n-k), \ldots, a(n-1)) = (a(n-k+1), \ldots, a(n)) \quad \text{for all } n \ge N + k. \]
Let $p_0 = (a(N), \ldots, a(N+k-1))$ and let $f$ be the projection to the last coordinate. Then $a(n+N) = f(\phi^n(p_0))$, and closure under shift (Theorem~\ref{thm:main}(5)) implies $a(n)$ is a dynamical sequence.
\end{proof}

\subsection{Elliptic divisibility sequences}

Elliptic divisibility sequences (EDS) arise from the study of points on elliptic curves and generalize classical divisibility sequences such as the Fibonacci numbers. See \cite{Everest} and \cite{Ward} for comprehensive accounts. An EDS satisfies a recurrence of the form:
\[ W_{n+2}W_{n-2}W_1^2 = W_{n+1}W_{n-1}W_2^2 - W_3W_1W_n^2, \quad \text{for all } n > 2. \]

\begin{prop}\label{prop:EDS}
Let $a(n)$ be an elliptic divisibility sequence. If $a(n)$ has finitely many zero terms, then $a(n)$ is a dynamical sequence.
\end{prop}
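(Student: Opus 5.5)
The plan mirrors the proof of Proposition~\ref{prop:somos}. First, rewrite the defining recurrence as a forward recurrence of order $4$. Replace $n$ by $n-2$ to get
\[ W_{n}W_{n-4}W_1^2 = W_{n-1}W_{n-3}W_2^2 - W_3W_1W_{n-2}^2 \quad \text{for } n>4 . \]
So $W_n$ is given by a rational function of the four previous values and the constants $W_1,W_2,W_3$, with denominator $W_{n-4}W_1^2$. For this rational function to be regular along the orbit, $W_1\neq 0$ and $W_{n-4}\neq 0$ are needed from some point on.

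Next, dispose of the degenerate case $W_1=0$. One option is to handle it using the divisibility-sequence structure, where in the standard normalization $W_1=1$. Another is to treat $W_1=0$ separately by showing the recurrence then imposes no useful constraint beyond what the usual definition of EDS excludes. I would note that an EDS is by convention normalized with $W_0=0$ and $W_1=1$, so $W_1\neq 0$. If the paper's definition allows $W_1=0$, I would observe that the sequence is then not determined by the recurrence in the needed way, and exclude it via the standard definition.

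Then choose $N$ so that $W_n\neq 0$ for all $n\ge N$; this is possible by the finiteness hypothesis. Take $X=\mathbb{A}^4$ and
\[ \phi(x_1,x_2,x_3,x_4)=\left(x_2,x_3,x_4,\frac{x_4x_2W_2^2-W_3W_1x_3^2}{x_1W_1^2}\right). \]
Set $p_0=(W_N,W_{N+1},W_{N+2},W_{N+3})$, with $N$ also at least the first index where the recurrence applies. By induction, $\phi^n(p_0)=(W_{N+n},\dots,W_{N+n+3})$. The orbit avoids the indeterminacy locus $\{x_1=0\}$ because every $W_m$ with $m\ge N$ is nonzero. Taking $f$ to be the projection to the first coordinate gives $W_{n+N}=f(\phi^n(p_0))$.

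Finally, recover the full sequence $a(n)$ from its tail using closure under modification of initial values, Theorem~\ref{thm:main}(6), with $j=N$ applied to the shifted sequence. Alternatively, apply Proposition~\ref{prop:R} directly to the shifted sequence with constant sequences $c_i$ and then invoke Theorem~\ref{thm:main}(6). The only real obstacle is the indexing and degenerate-case bookkeeping: ensuring $W_1\ne0$ and that the shift is large enough that all denominators along the orbit are nonzero. The dynamical construction itself is routine.
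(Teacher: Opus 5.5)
Your proposal is correct and is the same construction the paper intends: encode the order-$4$ recurrence as a rational self-map of $\mathbb{A}^4$, start the orbit after the last zero term, and recover the initial terms. You are more careful than the paper on two points. You note that $W_1 \neq 0$ is needed, which holds under the standard normalization. You also correctly invoke Theorem~\ref{thm:main}(6) to pass from the tail $W_{n+N}$ back to $W_n$, whereas the paper's Somos proof cites the forward-shift property (5), which goes in the wrong direction.
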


\begin{proof}
This proof follows the same strategy as the proof of Proposition~\ref{prop:somos}. The recurrence can be encoded in a rational map provided $a(n) \ne 0$ for all but finitely many $n$.
\end{proof}
\begin{remark} We note that a deep result due to Mazur \cite{Maz} about rational torsion points of elliptic curves gives as an application that a rational EDS $a(n)$ with $a(i)$ nonzero for $i\in \{2,3,\ldots ,12\}\setminus \{11\}$ then $a(n)\neq 0$ for $n\ge 1$ (see Gezer \cite[Sec. 3]{Gez}). \label{rem:Mazur}
\end{remark}

\subsection{Subsequences of linear recurrence sequences}
We now consider certain subsequences of a sequence $f(n)$ that satisfies a linear recurrence over $K$.
\begin{lemma} Let $\lambda\in K$.  Then the following hold:
\begin{enumerate}
\item[(i)] if $d$ is a positive integer then $a(n)=\lambda^{d^n}$ is a dynamical sequence over $K$;
\item[(ii)] if $P(x)\in \mathbb{Q}[x]$ is a polynomial with the property that $P(\mathbb{N})\subseteq \mathbb{N}$ then $b(n)=\lambda^{P(n)}$ is a dynamical sequence over $K$.
\end{enumerate}
\label{lem:lambda}
\end{lemma}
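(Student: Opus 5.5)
For part (i) I would write down an explicit dynamical system on $\mathbb{A}^1$. Take $X=\mathbb{A}^1$, $\varphi(x)=x^d$ (a morphism, so there is no indeterminacy), $x_0=\lambda$ and $f$ the identity. Then induction gives $\varphi^n(\lambda)=\lambda^{d^n}$, since $\varphi^{n+1}(\lambda)=(\lambda^{d^n})^d=\lambda^{d^{n+1}}$. So $a(n)=f(\varphi^n(x_0))$ is dynamical. If $d=1$ the sequence is constant, which is also covered by this construction.

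For part (ii) I would use finite differences. Let $e=\deg P$. Write $P(n)=\sum_{j=0}^{e} c_j\binom{n}{j}$, where $c_j=\Delta^j P(0)$. Since $P(\mathbb{N})\subseteq\mathbb{N}$, each $c_j$ is an integer. I would then set $Q_j(n)=\Delta^j P(n)$ for $j=0,\dots,e$, so that $Q_j(n+1)=Q_j(n)+Q_{j+1}(n)$ and $Q_e$ is a constant. The key point is that $\lambda^{Q_j(n)}$ may not be defined when $Q_j(n)<0$ and $\lambda=0$. So I would split into cases.

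\textbf{Case $\lambda\neq 0$.} All powers make sense in $K^*$. Define $\varphi$ on $\mathbb{A}^{e+1}$ by
\[
\varphi(y_0,\dots,y_e)=(y_0y_1,\;y_1y_2,\;\dots,\;y_{e-1}y_e,\;y_e).
\]
Take $x_0=(\lambda^{c_0},\dots,\lambda^{c_e})$, where negative exponents are fine because $\lambda\neq0$. Since $\varphi$ is a morphism, induction gives $\varphi^n(x_0)=(\lambda^{Q_0(n)},\dots,\lambda^{Q_e(n)})$. Projecting to the first coordinate then gives $b(n)=\lambda^{P(n)}$.

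\textbf{Case $\lambda=0$.} Here $b(n)$ equals $1$ when $P(n)=0$ and $0$ otherwise. If $P$ is nonconstant, it has finitely many roots, so $b$ is eventually $0$. By Theorem~\ref{thm:main}(6), a sequence that eventually agrees with the constant dynamical sequence $0$ is dynamical. If $P$ is constant, $b$ is constant.

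The main obstacle is making sure the orbit is always defined. This is why I use only multiplications, so that $\varphi$ is a morphism, and why I handle $\lambda=0$ separately. I also need the integrality of the $c_j$. This is the standard fact that an integer-valued polynomial has integer Newton coefficients, which follows by induction on degree using $\Delta$.
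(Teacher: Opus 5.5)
Your argument is correct. Part (i) coincides with the paper's, but part (ii) is organized differently.

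The paper iterates $\phi(a_1,\dots,a_d)=(\lambda a_1,a_2a_1,\dots,a_da_{d-1})$ from $(1,\dots,1)$, so the orbit is $(\lambda^{\binom{n}{1}},\dots,\lambda^{\binom{n}{d}})$. It then places the binomial-basis coefficients in the output map $f(u)=\lambda^{c_0}u_1^{c_1}\cdots u_d^{c_d}$. You instead place the coefficients in the starting point, let the difference-table recursion produce $\lambda^{\Delta^jP(n)}$ directly, and read off a coordinate.

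What your version gains is that it never needs $c_j\ge 0$. The paper asserts that the binomial coefficients of an $\N$-valued polynomial are nonnegative. That is false: for example $(x-1)^2=2\binom{x}{2}-\binom{x}{1}+1$. The classical fact only gives integrality.

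With some $c_i<0$, the paper's $f$ is a Laurent monomial. This is harmless when $\lambda\neq 0$, since the orbit stays in the torus. For $\lambda=0$, however, the coordinate $\lambda^{\binom{n}{i}}$ vanishes for all $n\ge i$, so $f$ is not regular along the orbit and the construction breaks.

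Your split handles exactly this gap. For $\lambda\neq0$ you use a morphism with starting point in $(K^*)^{e+1}$. For $\lambda=0$ you observe the sequence is eventually zero and apply Theorem~\ref{thm:main}(6). Your proof is therefore the more complete of the two.
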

\begin{proof}
To prove (i), we let $\psi: \mathbb{A}^1\to \mathbb{A}^1$ be the map $\psi(x)=x^d$ then $\psi^m(\lambda)=\lambda^{d^m}$ and so if we post compose with the identity map, we see that $\lambda^{d^m}$ is a dynamical sequence as well.  

To prove (ii), we note that a polynomial that takes nonnegative integer values at natural numbers is expressible in the form
$P(x) = \sum_{i=0}^d c_i {x\choose i}$ where $c_0,\ldots ,c_d$ are nonnegative integers by a classical result on binomial coefficient expansions of integer-valued polynomials (see Pólya and Szegő \cite[Problem 85, p. 129]{PS}). 
We now let $\lambda\in \bar{K}$ and we construct a dynamical sequence using the map
$\phi: \mathbb{A}^d\to \mathbb{A}^{d}$ as follows
$\phi(a_1,\ldots, a_d)= (\lambda a_1, a_2a_1,\ldots, a_d a_{d-1})$.  
Starting with $x_0=(1,1,\ldots ,1)$, we claim that 
$$\phi^n(x_0) = \left(\lambda^n, \lambda^{{n\choose 2}},\ldots ,\lambda^{{n\choose d}}\right).$$ We prove this by induction on $n$, with the base case when $n=0$ being immediate.  
Then observe that if the claim holds when $n=m$, then
\begin{align*}
\phi^{m+1}(x_0) &= \phi(\phi^m(x_0)) \\
&= \phi\left( \lambda^m, \lambda^{{m\choose 2}},\ldots ,\lambda^{{m\choose d}} \right) \\
&= \left( \lambda^{m+1}, \lambda^{{m\choose 2}+m},\ldots ,\lambda^{{m\choose d} + {m\choose d-1}} \right) \\
&= \left( \lambda^{m+1}, \lambda^{{m+1\choose 2}},\ldots ,\lambda^{{m+1\choose d}} \right).
\end{align*}
We now let $f:\mathbb{A}^d\to \mathbb{A}^1$ be the map $f(u_1,\ldots ,u_d) = \lambda^{c_0} u_1^{c_1}\cdots u_d^{c_d}$.  Then 
$$f(\phi^n(x_0))=\lambda^{P(n)}.$$  
\end{proof}
Using the preceding lemma, we can quickly prove the following result.
\begin{prop} Let $f(n)$ be a $K$-valued sequence that satisfies a linear recurrence over $K$.  Then the following hold:
\begin{enumerate}
\item[(i)] if $d$ is a nonnegative integer then $a(n)=f(d^n)$ is a dynamical sequence over $K$;
\item[(ii)] if $P(x)\in \mathbb{Q}[x]$ is a polynomial with the property that $P(\mathbb{N})\subseteq \mathbb{N}$ then $b(n)=f(P(n))$ is a dynamical sequence over $K$.
\end{enumerate}
\label{prop:LRS}
\end{prop}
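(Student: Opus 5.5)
Since $K$ is algebraically closed, I will use the standard closed form for linear recurrence sequences: there are distinct $\lambda_1,\ldots,\lambda_r\in K^*$ and polynomials $Q_1,\ldots,Q_r\in K[x]$ with
\[ f(n)=\sum_{j=1}^r Q_j(n)\lambda_j^n \quad\text{for all } n\ge N, \]
for some $N$. The early terms, which may come from zero roots of the characteristic polynomial, are not covered by this formula. So I will treat $f(m)$ for $m\ge N$ by the formula and handle the finitely many exceptional indices separately. By Theorem~\ref{thm:main}(1),(2), it then suffices to show that $Q(P(n))\lambda^{P(n)}$ and $Q(d^n)\lambda^{d^n}$ are dynamical, and that the correction terms are dynamical.

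For (ii), I will reduce to sequences $n\mapsto Q(P(n))$ and $n\mapsto \lambda^{P(n)}$. The second is dynamical by Lemma~\ref{lem:lambda}(ii). The first is a polynomial in $n$. The sequence $n$ comes from $x\mapsto x+1$ on $\mathbb{A}^1$ with $x_0=0$, so any polynomial in $n$ is dynamical by closure under sums and products (constants are dynamical). Hence each summand is a product of dynamical sequences.

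For the correction in (ii), let $b'(n)=\sum_j Q_j(P(n))\lambda_j^{P(n)}$. Then $b(n)$ differs from $b'(n)$ only at those $n$ with $P(n)<N$. If $P$ is nonconstant, these are finitely many $n$, since $P(n)\to\infty$. Modifying finitely many initial values is allowed by Theorem~\ref{thm:main}(6) with $j=0$. If $P$ is constant, $b$ is a constant sequence.

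For (i), the terms $\lambda^{d^n}$ are dynamical by Lemma~\ref{lem:lambda}(i). The sequence $d^n$ is dynamical via the map $x\mapsto dx$, so $Q(d^n)$ is dynamical by closure under sums and products. The case $d=0$ is degenerate, since $d^n$ is eventually $0$ and the sequence is eventually constant. The case $d=1$ gives a constant sequence. For $d\ge2$, only finitely many $n$ have $d^n<N$, and these are again handled by Theorem~\ref{thm:main}(6).

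The main obstacle is the bookkeeping for the non-generic part of $f$, namely the terms before $N$ arising from zero characteristic roots. I would handle it as above by passing to the eventual closed form and invoking Theorem~\ref{thm:main}(6). In positive characteristic the polynomial-times-exponential form still holds over an algebraically closed field (with $Q_j$ possibly taking values through binomial coefficients $\binom{n}{i}$). In that case I would write $Q_j(m)=\sum_i e_{ji}\binom{m}{i}$, and use that $\binom{P(n)}{i}$ and $\binom{d^n}{i}$ are integer polynomials in dynamical integer sequences, reduced into $K$. This remains dynamical by the same closure properties.
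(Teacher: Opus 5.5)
Your characteristic-zero argument is correct and follows the paper's route: the closed form, Lemma~\ref{lem:lambda} for the exponential factors, and closure under sums and products for the polynomial factors. On one point you are more careful than the paper. Its expansion $f(n)=\sum c_{i,j}\binom{n}{j}\lambda_i^n$ holds only for large $n$ when $0$ is a characteristic root. Your appeal to Theorem~\ref{thm:main}(6), together with the separate treatment of $d=0$ (which Lemma~\ref{lem:lambda}(i) does not cover), handles this.

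The gap is in characteristic $p>0$, which the statement includes. Your final step, that $\binom{P(n)}{i}$ and $\binom{d^n}{i}$ are ``integer polynomials in dynamical integer sequences, reduced into $K$'', does not work as stated.
\begin{itemize}
\item The polynomial $\binom{x}{i}=x(x-1)\cdots(x-i+1)/i!$ has denominator $i!$, which is $0$ in $K$ once $i\ge p$.
\item An integer-valued sequence that is dynamical over $\mathbb{Q}$ does not automatically reduce to a dynamical sequence over $K$, since its geometric data may have $p$ in denominators.
\item Concretely, take $p=2$ and $d=3$. In $K$ the sequence $\binom{3^n}{2}$ reads $0,1,0,1,\ldots$, while $3^n=1$ for every $n$. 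So no polynomial expression over $K$ in $d^n$ produces it.
\end{itemize}

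You need a genuinely characteristic-free input here. The paper's is that $\binom{d^n}{i}$ is eventually periodic mod $p$; the same reasoning applies to $\binom{P(n)}{i}$. This follows from Lucas's theorem and the eventual periodicity of $d^n$ and $P(n)$ modulo $p^k$. Hence these sequences satisfy linear recurrences and are dynamical.

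Alternatively, you can write down integral models:
\begin{itemize}
\item The Pascal map $(u_0,\ldots,u_M)\mapsto(u_0,u_1+u_0,\ldots,u_M+u_{M-1})$ applied to $(1,0,\ldots,0)$ yields $(\binom{n}{i})_{i\le M}$ over any field.
\item Since $\binom{P(n)}{i}$ is an integer-valued polynomial in $n$, it is a $\mathbb{Z}$-linear combination of the $\binom{n}{k}$.
\item For $\binom{d^n}{i}$, Vandermonde's identity $\binom{dm}{i}=\sum_{s_1+\cdots+s_d=i}\prod_{k}\binom{m}{s_k}$ defines a self-map of $\mathbb{A}^{M+1}$ with integer coefficients. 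Its orbit of $(1,1,0,\ldots,0)$ is $(\binom{d^n}{i})_{i\le M}$.
\end{itemize}
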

\begin{proof}
For (i), by \cite[Sec. 1.1.6]{Everest} there exists a natural number $M$ and there exist $\lambda_1,\ldots ,\lambda_s\in K$, and constant $c_{i,j}\in K$ with $1\le i\le s$, $0\le j\le M$, such that 
$$f(n) = \sum_{i,j} c_{i,j} {n\choose j} \lambda_i^n.$$
Now $f(d^n) = \sum_{i,j} c_{i,j} {d^n \choose j} \lambda_i^{d^n}$.  By Lemma \ref{lem:lambda}, $\lambda_i^{d^n}$ is a dynamical sequence over $K$ for $i=1,\ldots ,s$.  
If $K$ has characteristic zero, then for fixed $j$ the sequence ${d^n\choose j}= d^n(d^n-1)\cdots (d^n-j+1)/j!$ (as a sequence in $n$) is a finite product of dynamical sequences over $K$ and hence is itself a dynamical sequence by Theorem \ref{thm:main}(2).  Thus using closure under sum and product of dynamical sequences (Theorem \ref{thm:main}(1)--(2)), we have $f(d^n)$ is a dynamical sequence.  Similarly, if $K$ has characteristic $p>0$ then one can easily see that for fixed $j$, ${d^n\choose j}$ is eventually periodic mod $p$ and hence satisfies a linear recurrence and so it is a dynamical sequence. Thus ${d^n\choose j}$ is a dynamical sequence over $K$ for $0\le j\le M$ and so we again use closure under sums and products to obtain the result for $f(d^n)$. 

For (ii), we argue similarly and have 
$$f(P(n)) = \sum_{i,j} c_{i,j} {P(n)\choose j}\cdot \lambda_i^{P(n)},$$ and we again use Lemma \ref{lem:lambda}.
\end{proof}
\vskip 2mm
\subsection{Proof of Theorem \ref{thm:main2}} 
We now observe that Theorem \ref{thm:main2} follows by combining Propositions \ref{prop:Cn}, \ref{prop:somos}, \ref{prop:EDS}, and \ref{prop:LRS}.
\section{Automatic identity checking}
\label{sec:alg}
The Wilf-Zeilberger (WZ) method \cite{WZ} is a powerful and algorithmic tool for proving identities involving $D$-finite sequences (that is, sequences satisfying linear recurrences with polynomial coefficients). This technique has found particular success in hypergeometric summation and has significantly expanded the range of identities that can be rigorously established without resorting to \emph{ad hoc} manipulations.  We show here that one can similarly ``automatically'' verify when two dynamical sequences $a(n)$ and $b(n)$ agree for all $n$ and illustrate this with some key examples.

\subsection{Algorithm}
We describe an algorithm that takes as input two dynamical sequences $a(n)$ and $b(n)$ over a common algebraically closed field $K$ and determines whether $a(n)=b(n)$ for all $n$. Notably, we will show that this algorithm always terminates and either verifies that $a(n)=b(n)$ for all $n$ or produces some $n_0$ for which $a(n_0)\neq b(n_0)$.

To describe the algorithm, we assume that we have geometric data $(X,\phi, x_0, f)$ and $(Y,\psi, y_0,g)$ that generate dynamical sequences $a(n)$ and $b(n)$ respectively.  We then let $(Z,\chi, z_0,h)$ be the geometric data given by $Z=X\times Y$, $\chi=\phi\times \psi$, $z_0=(x_0,y_0)$, and $h(x,y)=f(x)-g(y)$ for $(x,y)\in X\times Y$.
Then $$c(n):= h(\chi^n(z_0)) = f(\phi^n(x_0))-g(\psi^n(y_0))= a(n)-b(n),$$ is itself a dynamical sequence and to check that $a(n)=b(n)$ for all $n$ it suffices to check that $c(n)=0$ for all $n$.  
So we will now assume that we are given geometric data $(Z,\chi,z_0,h)$ that generates a dynamical sequence $c(n)$, and we will show how to check that $c(n)=0$ for all $n\ge 0$.

To do this we let $Z_{-1}=Z$ and for each $i\ge 0$ at step $i$ we compute the set $Z_i$, which is defined to be the Zariski closure of the set
\begin{equation} \{z\in Z \colon h(z)=h\circ\chi(z)=\cdots =h\circ \chi^i(z)=0\}.\end{equation}  Since we consider rational maps in general, these maps need not be defined on all of $Z$, which necessitates taking the Zariski closure. We terminate at step $n$ if $Z_n = Z_{n-1}$.

We observe that $Z_{-1}, Z_0, Z_1, \ldots $ are Zariski closed sets and by construction $$Z_{-1}\supseteq Z_0\supseteq Z_1\supseteq \cdots \supseteq Z_n \supseteq \cdots $$ and since the Zariski topology is noetherian (cf. \cite[Example 1.4.7, p. 5]{Hart}), this chain must terminate and so there is some $n_0$ such that $Z_{n_0+1}=Z_{n_0}$.  Hence this procedure terminates. We then claim that $z\in Z_{n_0}\iff h\circ\chi^n(z)=0$ for all $n$.  To see this notice that if $z\not\in Z_0$ then by definition of $Z_{n_0}$ there is some $i\le n_0$ such that $h\circ \chi^i(z)\neq 0$, so one direction is clear.  Now suppose that $z\in Z_{n_0}$.  Then we claim that $h(\chi^n(z))=0$ for all $n$.  By definition this holds for all $n\le n_0$, so if the result does not hold then there is some smallest $m$ such that 
$h(\chi^m(z))\neq 0$ and $m>n_0$.
The equality $Z_{n_0+1}=Z_{n_0}$ implies that if
$$(h(u)=h\circ \chi(u)=\cdots = h\circ \chi^{n_0}(u) =0) \implies h\circ \chi^{n_0+1}(u)=0.$$
But now if we take $u=\chi^{m-n_0-1}(z)$ then $u\in Z_{n_0}$ and so 
$h\circ \chi^{n_0+1}(u) = h(\chi^m(z))=0$, a contradiction.  Thus we get the claim.  Hence after we have determined the value of $n_0$, we can check whether $c(n)=0$ for all $n$ by checking that $c(n)=0$ for $n\le n_0$.
Thus we have shown the following.
\begin{thm}
Let $(Z, \chi, z_0, h)$ be geometric data defining a dynamical sequence $c(n)$. The procedure above terminates and determines whether $c(n) = 0$ for all $n \in \mathbb{N}$. In particular, one can decide whether $a(n) = b(n)$ for all $n$ by checking finitely many terms.
\end{thm}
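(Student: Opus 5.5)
The plan is to make the argument preceding the statement rigorous in three steps: termination, correctness, and effectiveness. The delicate point is that $\chi$ and $h$ are only rational maps. For each $i\ge 0$ let $D_i\subseteq Z$ be the open set on which $h,h\circ\chi,\ldots,h\circ\chi^{i}$ are all regular. Here ``regular'' means: $\chi^j$ is defined at the point and $h$ is defined at its image, for $j\le i$. Let $V_i\subseteq D_i$ be the common zero locus of these functions, so that $Z_i=\overline{V_i}$. The orbit hypothesis on $z_0$ says exactly that $\chi^m(z_0)\in D_i$ for all $m,i$.

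\emph{Termination.} Since $D_{i+1}\subseteq D_i$, we have $V_{i+1}\subseteq V_i$, and hence $Z_{i+1}\subseteq Z_i$. By the noetherian property of the Zariski topology, this descending chain of closed sets stabilizes. Therefore the first index $n_0$ with $Z_{n_0+1}=Z_{n_0}$ exists, and the procedure stops there.

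\emph{Correctness.} The key observation is that $V_{i}$ is closed in $D_{i}$, being the zero set of regular functions on $D_i$. Hence $Z_{i}\cap D_{i}=V_{i}$. I will prove by induction on $m$ that if $u\in D_{n_0+1}$ satisfies $h(\chi^j(u))=0$ for $j\le n_0$, then $h(\chi^{n_0+1}(u))=0$. Such a $u$ lies in $V_{n_0}\subseteq Z_{n_0}=Z_{n_0+1}$, and $u\in D_{n_0+1}$, so $u\in Z_{n_0+1}\cap D_{n_0+1}=V_{n_0+1}$. Now suppose $c(n)=0$ for all $n\le n_0$, and let $m>n_0$ be minimal with $c(m)\neq 0$. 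Apply the observation to $u=\chi^{m-n_0-1}(z_0)$. This point lies in $D_{n_0+1}$ by the orbit hypothesis, and it satisfies $h(\chi^j(u))=c(m-n_0-1+j)=0$ for $j\le n_0$ by minimality of $m$. We conclude $c(m)=0$, a contradiction. Conversely, if some $c(n)\ne 0$ with $n\le n_0$, the identity visibly fails. Thus $c\equiv 0$ if and only if $c(0)=\cdots=c(n_0)=0$, which is a finite check.

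\emph{Effectiveness.} Embedding $Z$ in affine charts, each $h\circ\chi^j$ is a quotient $P_j/Q_j$ of computable polynomials. The set $V_i$ is then $\{P_0=\cdots=P_i=0\}\setminus\{Q_0\cdots Q_i=0\}$, up to the defining equations of $Z$ and the indeterminacy of the intermediate iterates. Its closure can be computed by saturation, $(I+(P_0,\ldots,P_i)) : (Q_0\cdots Q_i)^\infty$, using Gröbner bases. Equality of consecutive closures can then be tested by comparing radicals, and the finitely many values $c(n)$, $n\le n_0$, are computed directly.

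I expect the main obstacle to be the correctness step, specifically the passage from equality of Zariski closures to a pointwise implication. That passage needs exactly the identity $Z_i\cap D_i=V_i$ together with the fact that the whole orbit stays inside every $D_i$; the argument preceding the statement glosses over this. A secondary concern is justifying that the saturation correctly encodes the indeterminacy loci of composite rational maps. I would handle this by tracking denominators of every intermediate iterate $\chi^j$, not just those of $h\circ\chi^j$.
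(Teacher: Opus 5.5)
Your proposal is correct and follows the paper's argument: Noetherianity of the Zariski topology gives termination, and then a minimal-counterexample argument is applied to $u=\chi^{m-n_0-1}(z_0)$ using $Z_{n_0}=Z_{n_0+1}$. Your identity $Z_i\cap D_i=V_i$, together with running the argument only at orbit points (which lie in every $D_i$), supplies exactly the justification the paper glosses over when it passes from equality of Zariski closures to a pointwise implication. One small wording slip: ``by induction on $m$'' should go, since that observation is a direct one-step deduction.
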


\begin{remark}
\label{rem:bounds}
Effective bounds on the index $n_0$ at which the chain $\{Z_i\}$ stabilizes can be obtained using intersection theory and generalized Bézout estimates. More specifically, if we embed the variety $Z$ into a projective space, then we can use this embedding to give a notion of degree for subvarieties and maps. Work of Schmid \cite{Sch} shows how one can bound the number of $i$-dimensional components of an intersection of varieties in terms of the degrees of the irreducible components of the two varieties and from this one can give bounds on $n_0$ that depend only on the degrees of our maps. Further details can be found in the Master's thesis of the second-named author \cite{Sun}.  However, such bounds are often far from optimal in concrete applications.
\end{remark}
To illustrate the algorithm in action, we begin with a simple example, which we perform ``by hand.'' Although this manual computation is arguably more complicated than the actual proof, we do this to illustrate how one performs the steps of the algorithm with a concrete example. In subsequent examples, we will use SageMath to perform the algorithm.
\begin{example}
The identity
$$1+2+\cdots + n = n(n+1)/2$$ can be phrased as the equality of two dynamical sequences: if we let $X=\mathbb{A}^2$, $\Phi(x,y)=(x+1,y+x)$, and $f(x,y)=y$, then it is straightforward to see that with each iteration we are increasing the $y$ value by the value of $x$ and then incrementing $x$ by $1$ and so 
$\Phi^n(1,0)=(n+1,1+2+\cdots+ n)$, and hence $$f(\Phi^n(1,0))= \sum_{i=0}^n i.$$ On the other hand, if $Y=\mathbb{A}^1$, $\Psi(t)=t+1$, $g(t)=t(t+1)/2$, then $g(\Psi^n(0))= n(n+1)/2$.
To check the equality of $a(n)=b(n)$ for all $n$, we let
$$Z=\mathbb{A}^3=X\times Y$$ and let $\chi(x,y,z)=(x+1,y+x,z+1)$, and let $h(x,y,z)=y-z(z+1)/2$.
Then to use the algorithm, we let $Z_{-1}=Z=\mathbb{A}^3$.  We next compute $$Z_0=\{(x,y,z)\colon h(x,y,z)=0\}.$$  This is just the variety 
$$Z_0:= \{y=z(z+1)/2 \}.$$ To compute $Z_1$ we find the zero set of $h(x,y,z)$ and $$h\circ \chi(x,y,z)=h(x+1,y+x,z+1)= y+x- (z+1)(z+2)/2.$$  Then 
$$Z_1=\{(x,y,z)\colon y=z(z+1)/2, x = (z+1)(z+2)/2 - z(z+1)/2\}.$$
Next we compute $Z_2$ and we note that $h\circ \chi^2(x,y,z)= h(x+2, y+2x+1, z+2)=y+2x+1 - (z+2)(z+3)/2$.  We check that $Z_1$ is contained in the zero set of $y+2x+1-(z+2)(z+3)/2$ by checking that $y+x-(z+1)(z+2)/2$ vanishes whenever $y=z(z+1)/2, x=(z+1)(z+2)/2 - z(z+1)/2 = (z+1)$, since
$$z(z+1)/2 + 2(z+1)+1 = (z+2)(z+3)/2.$$  It follows that $Z_2=Z_1$ and so to check $1+\cdots +n = n(n+1)/2$, it suffices to check that the identity holds for $n=0,1$, which is indeed the case.
\end{example}
The preceding example falls under the umbrella of the WZ-method, but in general there are identities that can be checked which come from non-holonomic sequences.  We give a few examples---all of the following examples can be proven without computer assistance, but they are still in some cases non-obvious and have been selected to illustrate how the method works to give an automatic verification.  When applicable, we give the corresponding identifier from the OEIS (Online Encyclopedia of Integer Sequences). 

While implementing the algorithm, we monitored termination by generating Gröbner bases step by step, which allowed us to determine when the computation concluded. This is why, in our code, it appears as though we have prior knowledge of when the algorithm terminates. In general, one could instead use a while loop and let the algorithm determine termination automatically.
\subsection{A058635: the sequence $F_{2^n}$}
We let $F_m$ denote the $m$-th Fibonacci number.  We consider the dynamical sequence produced via the geometric data $(\mathbb{A}^1,\Psi,y_0,f)$ where
$\Psi : \mathbb{A}^1 \to \mathbb{A}^1$ is the map $\Psi(t)=t^2$, $y_0 = \rho^2 = (3+\sqrt{5})/2$, and 
\[
f(t) = \frac{1}{\sqrt{5}}\left(t - \frac{1}{t}\right).
\]
Then, for all $n \geq 0$, 
\[
f(\Psi^n(y_0)) = \frac{1}{\sqrt{5}}\left(\rho^{2^{n+1}} - \rho^{-2^{n+1}}\right) = F_{2^{n+1}}.
\]
Observe that if we now consider the geometric data
$(\mathbb{A}^2,\Phi, x_0, g)$, where
$\Phi : \mathbb{A}^2 \to \mathbb{A}^2$ is the morphism
\[
\Phi(x, y) = (xy, y^2 - 2),
\]
$x_0=(1,3)$ and 
$g(x,y)=x$, then we
compute:
\[
g(x_0) = 1, \quad g(\Phi(x_0)) = 3, \quad g(\Phi^2(x_0)) = 21, \ \ldots.
\]
We now verify that $a(n):= g(\Phi^n(x_0))$ equals $F_{2^{n+1}}=f(\Psi^n(y_0))$ for all $n \geq 0$.  To show this, we take $Z=\mathbb{A}^2\times \mathbb{A}^1$ and let $\chi(x,y,z)=(xy,y^2-2,z^2)$, $z_0=(1,3,\rho^2)$ and let $h(x,y,z)=x-(z-1/z)/\sqrt{5}$. 

We now use SageMath. We need to work over the field $\mathbb{Q}(\sqrt{5})$ and we need to invert $z$, so to work with $\mathbb{Q}(\sqrt{5})[x,y,z,z^{-1}]$ we instead work with the polynomial ring $\mathbb{Q}(\sqrt{5})[x,y,z,t]$ and impose the relation $zt-1=0$.  This is the code we used to compute and compare Gr\"obner bases for the ideals generated $h,h\circ \chi, h\circ \chi^2$ and $h,h\circ \chi$:
\begin{verbatim}
# Step 1: Define the field Q(sqrt(5))
P.<s> = PolynomialRing(QQ)
K.<r> = NumberField(s^2 - 5)

# Step 2: Polynomial ring over K in x, y, z, t with lex order
R.<x, y, z, t> = PolynomialRing(K, order='lex')

# Step 3: Define the original discrepancy polynomial h
h0 = x - (z - t)/r  # h(x, y, z), with t = 1/z

# Step 4: Define the chi map (on x, y, z) as substitution
def chi(poly):
    return poly.subs({
        x: x * y,
        y: y^2 - 2,
        z: z^2,
        t: t^2
    })

# Step 5: Compute h, h(chi), h(chi^2)
h1 = chi(h0)
h2 = chi(h1)

# Step 6: Impose the relation t = 1/z via g = zt - 1
g = z * t - 1

# Step 7: Build the two ideals
I1 = R.ideal([h0, h1, g])
I2 = R.ideal([h0, h1, h2, g])

# Step 8: Compute Groebner bases
G1 = I1.groebner_basis()
G2 = I2.groebner_basis()

# Step 9: Compare the two bases
print("GBs equal at step 2?", set(G1) == set(G2))
\end{verbatim}
We get the output:
\begin{verbatim}
GBs the same? True
\end{verbatim}
So it suffices to check that $a(n)=F_{2^{n+1}}$ for $n=0,1$ (which is indeed the case) to obtain that it holds for all $\mathbb{N}$.

\subsection{A000178: Superfactorials, the product of the first $n$ factorials.}
This is the sequence $a(n)$ with initial values 
$$a(0)=1, a(1)= 1, a(2)= 2, a(3)= 12, a(4)= 288, a(5)=34560, \ldots.$$  It's relatively straightforward to make a dynamical sequence producing the superfactorials by taking 
the geometric data $(\mathbb{A}^3,\Phi, (1,1,1), f)$ where $\Phi(x,y,z)=(x+1,xy,xyz)$ and $f(x,y,z)=z$.  
Notice that a straightforward induction shows that $$\Phi^n(1,1,1)=\left(n+1, n!, \prod_{i=0}^n i!\right),$$ and so $$f(\Phi^n(1,1,1)) = \prod_{i=0}^n i!,$$ the $n$-th superfactorial. 
The OEIS page gives a recurrence for the superfactorials due to Somos: if $b(n)$ is given by
$$b(n+3) = (b(n)b(n+2)^3 + b(n+1)^2b(n+2)^2)/b(n+1)^3$$ for all $n$ where $b(-1)=b(0)=b(1)=1$, then $b(n)=a(n)$ for all $n$.  Thus if we let
$p=(1,1,1)=(b(-1),b(0),b(1))$ and let $\Psi:\mathbb{A}^3\to \mathbb{A}^3$ be the map $\Psi(x,y,z)= (y,z, (xz^3+y^2z^2)/y^3)$, then we verify by induction that 
$b(n)=g(\Psi^n(1,1,1))$, where $g(x,y,z)=y$.  We can then run the algorithm to verify that $a(n)=b(n)$ for all $n$, which we do here.

Here we let $Z=\mathbb{A}^3\times \mathbb{A}^3$, let $$\chi(x_1,y_1,z_1, x_2,y_2,z_2) = (x_1+1, x_1y_1, x_1y_1z_1, y_2, z_2, (x_2 z_2^3+y_2^2 z_2^2)/y_2^3)$$ and let $h$ denote the map $z_1-y_2$.  We compute the Gr\"obner basis for the ideal generated by the relations $h=h\circ \chi =h\circ \chi^2=0$ and compare it to the Gr\"obner basis for the ideal generated by the relations $h=\cdots = h\circ \chi^3=0$.  We must clear denominators at intermediate steps to do this computation in SageMath, but this does not affect the validity of the termination argument for the algorithm, since our orbits avoid the indeterminacy loci of the maps.  
We use the following SageMath commands:
\begin{verbatim}
# Polynomial ring setup
A = PolynomialRing(QQ, ['x1','y1','z1','x2','y2','z2'], order='degrevlex')
x1, y1, z1, x2, y2, z2 = A.gens()

# Discrepancy function
h = z1 - y2

# Corrected chi map
def chi(v):
    x1, y1, z1, x2, y2, z2 = v
    return (
        x1 + 1,
        x1 * y1,
        x1 * y1 * z1,
        y2,
        z2,
        (x2 * z2^3 + y2^2 * z2^2) / y2^3
    )

# Cleared h(chi^n)
def cleared_h(v):
    expr = h.subs(dict(zip((x1, y1, z1, x2, y2, z2), v)))
    return expr.numerator()

# Generate h_0 through h_3
v = (x1, y1, z1, x2, y2, z2)
h_polys = []
for _ in range(4):  # up to h(chi^3)
    h_i = cleared_h(v)
    h_polys.append(h_i)
    v = chi(v)

# Compute G2 and G3
I2 = A.ideal(h_polys[:3])
I3 = A.ideal(h_polys[:4])

G2 = I2.groebner_basis()
G3 = I3.groebner_basis()

# Compare
print("GBs the same? ", set(G2) == set(G3))
\end{verbatim}
This confirms that it suffices to verify the identity for $n \le 2$, which is indeed the case.
\subsection{A006769: Elliptic divisibility sequence associated with $y^2 + y = x^3 - x$ and multiples of the point $(0,0)$}
This is the sequence $a(n)$ whose initial values (starting with index $n=0$) are given by $$0, 1, 1, -1, 1, 2, -1, -3, -5, 7, -4, -23, 29, 59, 129, -314, \ldots.$$
The OEIS page gives two recurrences for this sequence:
$$a(n) = (a(n-1)\cdot  a(n-3) + a(n-2)^2) / a(n-4)$$ for all $n\ge 5$ and
$$a(n) = (-a(n-1) \cdot a(n-4) - a(n-2) \cdot a(n-3)) / a(n-5)$$ for all $n\ge 6$.  We note that by Remark \ref{rem:Mazur} this sequence never takes the value zero.
We let $X=\mathbb{A}^4$ and to model the first recurrence, we let 
$\phi : X\to X$ be the map
$\phi(x,y,z,w) = (y,z,w, (wy+z^2)/x)$, so by construction $$\phi(a(i), a(i+1),a(i+2),a(i+3)) = (a(i+1), a(i+2), a(i+3), a(i+4))$$ for all $i\ge 1$.  In particular, if $f(x,y,z,w)=x$ then
$(X,\phi, (1,1,-1,1), f)$ generates the sequence $a(n+1)$ for all $n\ge 0$.  
Similarly, if $Y=\mathbb{A}^5$, we can model the second recurrence by taking $\psi(x,y,z,w,t) = (y,z,w,t, -(ty+zw)/x)$ and if we take $g(x,y,z,w,t)=x$ then
we see in a similar manner that $(Y,\psi, (1,1,-1,1,2),g)$ also generates the sequence $a(n+1)$. We can verify that these recurrences both generate $a(n+1)$ by applying the algorithm.  Here we take $Z=\mathbb{A}^9$, 
\begin{align*}
&~ \chi(x_1,y_1,z_1,w_1, x_2,y_2,z_2,w_2,t_2)\\
&=( y_1,z_1,w_1, (w_1y_1+z_1^2)/x_1, y_2, z_2,w_2,t_2, -(t_2y_2+z_2w_2)/x_2),
\end{align*}
 and let 
$h=x_1-x_2$, then if we compute the Gr\"obner basis for the relations $h=h\circ \chi= \cdots =  h\circ \chi^5=0$ and the Gr\"obner basis for the relations
$h=h\circ \chi= \cdots =  h\circ \chi^6=0$ and compare them. We find that the resulting Gröbner bases are equal.
In this case, due to the introduction of denominators, we have to clear denominators to compute. However, since our orbits avoid the indeterminacy loci, this is unimportant for verification.

We use the following SageMath commands:
\begin{verbatim}
# Set up polynomial ring over Q
A = PolynomialRing(QQ, ['x1','y1','z1','w1','x2','y2','z2','w2','t2'])
x1, y1, z1, w1, x2, y2, z2, w2, t2 = A.gens()

# Define h = x1 - x2
h = x1 - x2

# Define the map chi
def chi(v):
    x1, y1, z1, w1, x2, y2, z2, w2, t2 = v
    return (
        y1,
        z1,
        w1,
        (w1 * y1 + z1^2) / x1,
        y2,
        z2,
        w2,
        t2,
        -(t2 * y2 + z2 * w2) / x2
    )

# Generate cleared iterates of h (chi^i)
def cleared_h_iterates(max_iter):
    v = (x1, y1, z1, w1, x2, y2, z2, w2, t2)
    iterates = [v]
    h_polys = []
    for _ in range(max_iter + 1):
        expr = h.subs(dict(zip(v, iterates[-1])))
        cleared = expr.denominator() * expr  # clear denominators
        h_polys.append(cleared)
        iterates.append(chi(iterates[-1]))
    return h_polys

# Compute Groebner bases
G5 = A.ideal(cleared_h_iterates(5)).groebner_basis()
G6 = A.ideal(cleared_h_iterates(6)).groebner_basis()

# Compare
print("GBs the same?", set(G5) == set(G6))
\end{verbatim}
This produces the output:
\begin{verbatim}
GBs the same? True
\end{verbatim}
Thus the identity can be verified by checking that it holds for $n\le 5$, which is the case.
\subsection{A006720: Somos-$4$ sequence: $b(0)=b(1)=b(2)=b(3)=1$, $b(n) = (b(n-1) b(n-3) + b(n-2)^2) / b(n-4)$ for $n\ge 4$.}
This sequence satisfies the same recurrence as the preceding elliptic divisibility sequence, but has different initial values. Its first few terms are given by
$$1, 1, 1, 1, 2, 3, 7, 23, 59, 314, 1529, 8209, 83313, 620297, \ldots.$$
The OEIS page notes (using slightly different indexing) that $$b(n)=a(2n-3)(-1)^n\qquad {\rm for}~ n\ge 2.$$ This identity is non-obvious but can be easily verified using the algorithm we describe. To do this, we note that the values of $b(n)$ are all positive, and we can encode the recurrence by noting that 
$\phi(b(i),b(i+1), b(i+2), b(i+3))=(b(i+1),b(i+2), b(i+3), b(i+4))$ when $\phi:\mathbb{A}^4\to \mathbb{A}^4$ is the map
$\phi(x,y,z,w)=(y,z,w, (wy+z^2)/x)$.  Hence if we let $p_0=(b(2),b(3),b(4),b(5))=(1,1,2,3)$ then $f\circ \phi^n(p_0)=b(n+2)$ for all $n\ge 0$, where $f(x,y,z,w)=x$.  We wish to show that 
$b(n+2) = (-1)^n a(2n+1)$ for all $n\ge 0$.  The sequence $a(n)$ satisfies the same recurrence as $b(n)$, and we know from the preceding subsection that 
$a(n+1)=f(\phi^n(1,1,-1,1))$ for all $n\ge 0$, where we are using the same maps $\phi$ and $f$ as the ones that generate the sequence $b(n+2)$.  In particular, if we let $\psi=\phi^2$ then $f(\psi^n(1,1,-1,1))= f(\phi^{2n}(1,1,-1,1))=a(2n+1)$.  We compute
$$\psi(x,y,z,w)=\phi(y,z,w,(wy+z^2)/x) = (z,w, (wy+z^2)/x, (zx^{-1}(wy+z^2) +w^2)y^{-1}).$$ 
$$f(\psi^n(1,1,-1,1))=f(\phi^{2n}(1,1,-1,1)) = a(2n+1)$$ for all $n\ge 0$.  
To keep track of the alternating sign, we can add an extra coordinate whose value alternates between $1$ and $-1$ at each step:
let $\chi: \mathbb{A}^5\to \mathbb{A}^5$ be given by $\chi(x,y,z,w,t)= (\psi(x,y,z,w), -t)$ and let $g(x,y,z,w,t)=xt$. Then
 $$g(\chi^n(1,1,-1,1,1))= a(2n+1)(-1)^n.$$  We can now verify that $g(\chi^n(1,1,-1,1,1))=f(\phi^n(1,1,2,3))$ for all $n\ge 0$ via our algorithm.
 Here we let $Z=\mathbb{A}^9$ and let $\Theta: Z\to Z$ be the map
 \begin{align*}
 &~\Theta(x_1,y_1,z_1,w_1,x_2,y_2,z_2,w_2,t)\\
 &= (y_1,z_1,w_1, (w_1y_1+z_1^2)/x_1,  z_2,w_2, (w_2y_2+z_2^2)/x_2,  (z_2x_2^{-1}(w_2y_2+z_2^2) +w_2^2)y_2^{-1},-t),
 \end{align*}
  and let
 $h$ be the rational map $x_1-x_2t$. 
 We use the following SageMath commands to show that the Gr\"obner bases for the ideals (with cleared denominators) generated by $h\circ \Theta^i$ for $i\le 6$ and for $h\circ \Theta^i$ for $i\le 7$ are the same:
\begin{verbatim}
# Polynomial ring over Q with 9 variables
A = PolynomialRing(QQ, ['x1','y1','z1','w1','x2','y2','z2','w2','t'], 
order='degrevlex')
x1, y1, z1, w1, x2, y2, z2, w2, t = A.gens()

# Discrepancy function
h = x1 - x2 * t

# Theta map
def Theta(v):
    x1, y1, z1, w1, x2, y2, z2, w2, t = v
    return (
        y1,
        z1,
        w1,
        (w1 * y1 + z1^2) / x1,
        z2,
        w2,
        (w2 * y2 + z2^2) / x2,
        ((z2 / x2) * (w2 * y2 + z2^2) + w2^2) / y2,
        -t
    )

# Cleared denominators for h(Theta^n)
def cleared_h(v):
    expr = h.subs(dict(zip((x1, y1, z1, w1, x2, y2, z2, w2, t), v)))
    return expr.numerator()

# Generate up to h_7
v = (x1, y1, z1, w1, x2, y2, z2, w2, t)
h_polys = []
for _ in range(8):  # go to h_7
    h_i = cleared_h(v)
    h_polys.append(h_i)
    v = Theta(v)

# Construct ideals
I6 = A.ideal(h_polys[:7])
I7 = A.ideal(h_polys[:8])

# Compute Groebner bases
G6 = I6.groebner_basis()
G7 = I7.groebner_basis()

# Compare
print("GBs the same?", set(G6) == set(G7))
\end{verbatim}
Interestingly, in all other cases we considered, the program terminated almost instantly; in this case, however, it took approximately three seconds to run and confirmed that the Gr\"obner bases are equal. Hence, to verify that $a(n) = b(n)$ for all $n$, it suffices to check that $a(n) = b(n)$ for $n \leq 6$, which is straightforward.

\subsection{The identity $(1+x)(1+x^2)\cdots (1+x^{2^{n-1}}) = (x^{2^n}-1)/(x-1)$}
While this identity is not difficult to prove, we can give an easy ``automatic'' proof. For $\lambda\in \mathbb{C}$, with $\lambda\neq 1$, we can define two dynamical sequences
$$a(n)=(1+\lambda)(1+\lambda^2)\cdots (1+\lambda^{2^{n-1}})\qquad {\rm and} \qquad b(n)=(\lambda^{2^n}-1)/(\lambda-1).$$  To generate $b(n)$, we just iterate the squaring map on $\mathbb{A}^1$, so we let $\Phi(x)=x^2$ and let $f(x)=(x-1)/(\lambda-1)$.  Then since $\Phi^n(\lambda)=\lambda^{2^n}$, $f(\Phi^n(\lambda))= (\lambda^{2^n}-1)/(\lambda-1)=b(n)$.  On the other hand, we can realize $a(n)$ as a dynamical sequence by taking 
the map $\Psi:\mathbb{A}^2\to \mathbb{A}^2$ given by $\Psi(x,y)=(x^2, (x+1)y)$.  Then by induction 
$$\Psi^n(\lambda,1)=\left(\lambda^{2^n}, \prod_{i=0}^{n-1} (1+\lambda^{2^i})\right),$$ with the convention that an empty product equals $1$.  Then if $g(x,y)=y$,
$$a(n)=g(\Psi^n(\lambda,1)).$$ We now employ the algorithm to check that $a(n)=b(n)$ for all $n$.  

We work over the field $\mathbb{Q}(t)$, and we let $Z=\mathbb{A}^2\times \mathbb{A}^1$ and let $\chi(x,y,z)=(x^2, (x+1)y, z^2)$ and let $h(x,y,z)=y - (z-1)/(t -1)$.  
The following code in SageMath verifies that the Gr\"obner bases for the ideal of $\mathbb{Q}(t)[x,y,z]$ generated by $h, h\circ \chi$, and the ideal generated by  $h, h\circ \chi, h\circ \chi^2$ are the same, and so to verify the identity for all $n$, it suffices to check that it holds for $n=0,1$:
\begin{verbatim}
# Define base field
R.<t> = QQ[]

# Polynomial ring in three variables over Q(t)
A.<x, y, z> = PolynomialRing(Frac(R))

# Define h
h = y - (z - 1)/(t - 1)

# Define chi as a substitution dictionary
def chi(poly):
    return poly(x^2, (x + 1)*y, z^2)

# Compose h with chi
h1 = chi(h)
h2 = chi(h1)

# Generate the ideals
I1 = ideal([h, h1])
I2 = ideal([h, h1, h2])

# Compute Groebner bases
G1 = I1.groebner_basis()
G2 = I2.groebner_basis()

# Check if the second ideal gives anything new
print("GBs the same?", set(G1) == set(G2))
\end{verbatim}
Running this code gives the output:
\begin{verbatim}
GBs the same? True
\end{verbatim}
Thus we recover the classical identity
$(1+t)(1+t^2)\cdots (1+t^{2^{n-1}}) = (t^{2^n}-1)/(t-1)$ for $t\neq 1$.

\section{Open questions}
\label{sec:questions}

In this section, we collect several open problems concerning dynamical sequences.

We begin by revisiting the closure properties given in Theorem~\ref{thm:main}. Notably absent from this list is closure under convolution. It remains open whether the class of dynamical sequences is closed under this operation.

\begin{quest} \label{quest:1}
Let $a(n)$ and $b(n)$ be dynamical sequences over a field $K$. Is the sequence
\[
c(n) := \sum_{i=0}^n a(i) b(n-i)
\]
necessarily a dynamical sequence over $K$?
\end{quest}

We suspect the answer is \emph{no}. However, if the answer were affirmative, the generating functions of dynamical sequences would form a subring of the ring of formal power series.

\medskip

The dynamical Mordell–Lang conjecture (see \cite{DML-book}) asserts that if $X$ is a variety defined over a field $K$ of characteristic zero, $\phi:X\dasharrow X$ is a rational map, $V \subseteq X$ is a subvariety, and $x_0 \in X$ is a point whose orbit avoids the indeterminacy locus of $\phi$, then the set of $n$ for which $\phi^n(x_0)\in V$ is a finite union of infinite arithmetic progressions together with a finite set. 

If we take $V$ to be the zero set of a rational map $f$, this suggests that the zero set of a dynamical sequence over a field of characteristic zero should have a similar form, generalizing the classical Skolem–Mahler–Lech theorem (see \cite[Ch.~2]{Everest} or \cite{Lech}).

\begin{quest}
Let $a(n)$ be a dynamical sequence over a field of characteristic zero. Is the set of $n$ for which $a(n) = 0$ a finite (possibly empty) union of infinite arithmetic progressions along with a finite set?
\end{quest}

We note that this fails in characteristic $p > 0$, even for linear recurrences. For example, as shown by Lech \cite{Lech}, let $p$ be a prime, $K = \overline{\mathbb{F}_p(t)}$, and define $\phi(x,y) = (tx, (1 + t)y)$ and $f(x,y) = y - x - 1$. Then $f(\phi^n(1,1)) = 0$ if and only if $n \in \{1, p, p^2, p^3, \ldots\}$.

\medskip

Skolem's problem asks whether there is an algorithm to decide whether a given linear recurrence sequence takes the value zero at some term. We note that this problem has received considerable attention recently (see for example \cite{Sko}). We pose the analogous question for dynamical sequences.

\begin{quest}
Let $K$ be a number field and let $a(n)$ be a $K$-valued dynamical sequence. Is it decidable whether $a(n) = 0$ for some $n$?
\end{quest}

By closure under partial products (Theorem~\ref{thm:main}(4)), this is equivalent to deciding whether $a(n) = 0$ for all sufficiently large $n$.

\medskip

In the case of integer sequences such as certain elliptic divisibility sequences and Somos sequences, it is often not obvious from the recurrence that all terms are integers. This motivates the following question.

\begin{quest}
Let $a(n)$ be a $\mathbb{Q}$-valued dynamical sequence. Is there an algorithm to determine whether $a(n)$ is an integer for all $n$?
\end{quest}

Silverman (see \cite[Theorem 2.2]{Silverman}) has shown that in the case of a rational map $\phi:\mathbb{P}^1\to \mathbb{P}^1$, if $\phi^n(a)\in \mathbb{Z}$ for all $n$ then $\phi^2(x)$ is a polynomial and hence in this case one can check that the orbit is integer-valued by computing the map $\phi^2$, but in general the question remains open.

\medskip

In the case of integer-valued holonomic sequences, the Pólya–Carlson \cite{PC} theorem implies that if the generating function is not rational, then it has the unit circle as a natural boundary. This leads us to ask whether a dynamical analogue of rationality holds under suitable growth constraints.

\begin{quest}
Let $a(n)$ be an integer-valued dynamical sequence and suppose that $|a(n)| = e^{o(n)}$. Must $a(n)$ satisfy a linear recurrence?
\end{quest}

\medskip

Pisot's conjecture (proved by Zannier; see \cite{Zannier}) states that if $b(n)$ is a sequence of nonnegative integers and $b(n)^d$ satisfies a linear recurrence for some $d \ge 1$, then $b(n)$ itself must satisfy a linear recurrence. We ask whether this extends to dynamical sequences.

\begin{quest}
Let $b(n)$ be a sequence of nonnegative integers and suppose that $b(n)^d$ is a dynamical sequence for some $d \ge 1$. Must $b(n)$ also be a dynamical sequence?
\end{quest}

\medskip

We also ask whether one can give a coarse classification of the asymptotic behaviour of heights of $\overline{\mathbb{Q}}$-valued dynamical sequences. See \cite{BHS, BNZ} for related questions.

\begin{quest}
Can one classify all possible height gaps for $\overline{\mathbb{Q}}$-valued dynamical sequences?
\end{quest}

Even for $D$-finite sequences, this remains challenging (see \cite[Question 4.3]{BNZ}), but algebraic dynamical systems may yield more exotic growth patterns due to the underlying geometry.

\medskip

Finally, we note that many interesting binary sequences arise naturally as sign patterns of integer-valued dynamical sequences. Silverman and Stephens \cite{SilStur} have shown that many Sturmian sequences can be obtained in this manner from Elliptic divisibility sequences.

\begin{quest}
Let $a(n)$ be an integer-valued dynamical sequence and define $\epsilon(n) = \operatorname{sgn}(a(n)) \in \{\pm 1\}$. Which binary sequences can arise in this way? Can one characterize or give natural classes of binary sequences that cannot be realized as sign patterns of dynamical sequences?
\end{quest}
In the case of linear recurrences, this is the question of understanding the positivity set (see \cite{BG}).

\end{document}